\newcites{main,supp}{References,References}
\def\section{\@startsection{section}{1}
	\z@{1.0\linespacing\@plus\linespacing}{.8\linespacing}{\Large}}
\def\subsection{\@startsection{subsection}{2}
	\z@{.8\linespacing\@plus.7\linespacing}{.7\linespacing}{\large}}
\def\subsubsection{\@startsection{subsubsection}{3}
	\z@{.5\linespacing\@plus.7\linespacing}{-.5em}{\normalfont\bfseries}}
\numberwithin{equation}{section}
\newtheorem{theorem}{Theorem}[section]
\newtheorem{lemma}{Lemma}[section]
\newtheorem{proposition}{Proposition}[section]
\newtheorem{corollary}{Corollary}[section]
\theoremstyle{definition}
\newtheorem{definition}{Definition}[section]
\theoremstyle{definition}
\newtheorem{assumption}{Assumption}[section]
\theoremstyle{definition}
\title{}
\begin{document}


	\vspace*{3ex minus 1ex}
	\begin{center}
		\Large \textsc{Schooling Choice, Labour Market Matching, and Wages}

		\bigskip
	\end{center}
	
	\date{%
		\today%
	}
	
	\vspace*{3ex minus 1ex}
	\begin{center}
		Jacob Schwartz\\
		\textit{Department of Economics, University of Haifa}\\
		\medskip
		
	\end{center}
	
	\thanks{Contact: Jacob Schwartz, Department of Economics, University of Haifa, 199 Aba Khoushy Ave., Mount Carmel, Haifa, 3498838, Israel. Email: jschwartz@econ.haifa.ac.il. }
	
	
	 	\begin{abstract}
       We develop inference for a two-sided matching model where the characteristics of agents on one side of the market are endogenous due to pre-matching investments. The model can be used to measure the impact of frictions in labour markets using a single cross-section of matched employer-employee data.  The observed matching of workers to firms is the outcome of a discrete,  two-sided matching process where firms with heterogeneous preferences over education sequentially choose workers according to an index correlated with worker preferences over firms.  The distribution of education arises in equilibrium from a Bayesian game: workers, knowing the distribution of worker and firm types, invest in education prior to the matching process. Although the observed matching exhibits strong cross-sectional dependence due to the matching process, we propose an asymptotically valid inference procedure that combines discrete choice methods with simulation.
      
		\medskip
		
		{\noindent \textsc{Key words.} Two-sided matching; Strategic Interactions; Pre-match investments; Cross-sectional dependence; Structural estimation; Bayesian game estimation; Order statistics; Wage inequality; Human capital}
		\medskip
		
		{\noindent \textsc{JEL Classification: C21, C51, C57, J31}}
	\end{abstract}
	
	\maketitle
	\pagebreak

\section{Introduction}
Since the 1980s, economists have attributed rising wage inequality
to a number of sources. One possible source of such inequality is
positive assortative matching between workers and firms - the tendency
for the quality of workers and firms who match with one another to
be positively correlated.\footnote{Recent empirical papers examining the role of sorting on wage inequality
include \citet*{Card/Heining/Kline:13:QJE}, \citet*{Barth/Bryson/Davis/Freeman:2016:JL},
and \citet*{Kantenga/Law:2016}.} Unfortunately, studying matching in labour markets presents a serious
challenge when the decisions of individual job seekers affect each
other's hiring outcomes. This paper develops a methodology to address
this challenge. In particular, we show how cross-sections of matched
employer-employee data can be used to study the role that a labour
market matching technology plays in shaping the equilibrium distributions
of education and wages. The structural model we develop can capture assortative matching between workers
and firms even in the absence
of complementarities between worker and firm types in the match production
function.\footnote{The value of a match between any type of worker, $h$, and any type
of firm, $k$, can be represented using a positive, increasing function,
$f(h,k)$. We say that the types are complements in $f$ when the
marginal product of an $h$ type is higher when matched with a higher
$k$ type (and vice versa). }

A general overview of the labour market in the model is as follows.
Agents from one side of the market sequentially choose agents from
the other side according to their preferences. Preference rankings
of the choosers depend on a preference parameter, along with the capital
of both types of agents. The order in which the choosers pick depends
on the chooser's capital and a matching technology parameter. Before
matching, the agents who will be chosen are allowed to simultaneously
decide their capital given the distribution of the chooser's capital
and the underlying parameters of the economy (including the frictions).

This paper contributes to the econometric literature concerned with
inference in two-sided matching models.\footnote{See \citet*{Chiappori/Salanie:16:JEL} for a review of this literature.
A seminal paper in this literature is \citet*{Choo/Siow:06:JPE},
which considers inference in a transferable utility setup with a continuum
of agents. } A key feature of our setup is that the characteristics of agents on
one side of the market are endogenous - in particular, arising in
equilibrium from a pre-matching investment game. We show how, rather
than making the empirical analysis intractable, accommodating such
pre-matching investments provides the researcher useful information
for inference.\footnote{A popular approach for estimating two-sided matching models builds
on the notion that the observed matching is pairwise stable. For example,
see \citet*{Fox/Bajari:13:AEJM}, \citet*{Echenique/Lee/Shum:13:SEM},
and \citet*{Menzel:15:Ecta}. Requiring that the observed matching
be pairwise stable may be unrealistic in the context of a frictional
labour matching market of the sort that is the focus of this paper.} We propose a two-stage approach for inference on the agents' preferences
and the matching technology. In the first stage, we fix the matching
technology and construct confidence regions for the preference parameter
by estimating the Bayesian game associated with the workers' pre-match
investment in education decision. We show that this problem can be
cast in a discrete choice framework yielding tractable and consistent estimation via maximum likelihood when the workers' educational decision takes one
of two values (college, or no college). In the second stage, we construct
confidence intervals for the matching technology using a simulation-based
inference approach. In the first stage, the presence of the matching
function in workers' expected utility function makes estimating workers'
equilibrium expectations highly non-trivial. Nevertheless, under reasonable
assumptions, we show that workers' equilibrium expectations can be
written in a closed form suitable for consistent estimation. The second-stage
inference on the matching technology uses the following insight: once
the matching process is specified, the finite sample distribution
of the observed matching is known up to a parameter.\footnote{This idea of using a structural model to characterize the joint distribution
of a discrete matching model that can then be used for inference on
the model parameters builds from \citet*{Kim/Schwartz/Song/Whang:19:Econometrics}.
Although computationally intractable when the dimension of the parameter
is large, this approach is attractive for inference on the matching
technology parameter in the second stage of our approach.} We construct a test statistic that measures the distance between the
observed joint distribution of worker education and matched firm capital
to simulated counterparts. A confidence interval for the matching
technology can then be constructed by inverting the test.

This paper builds on the fundamental insights of \citet*{Becker:73:JPE}
and \citet*{Gale/Shapley:62:AMM} to illustrate how an economically meaningful notion of sorting can be captured in a model that assumes additive worker and firm effects. Since \citet*{Abowd/Kramarz/Margolis:99:Ecta}
(AKM), the availability of matched employer-employee data has allowed
researchers to study the role that unobserved worker and firm attributes
play in driving wage variation over time. In AKM, the correlation
of worker and firm fixed effects from wage regressions is taken to
capture a notion of sorting. Although popular for investigating the
wage structure, a burgeoning literature has criticised the viability
of AKM for detecting sorting on unobservables. In particular, the
additive structure of AKM implies that wages are monotone in firm
type - an implication that is difficult to reconcile with equilibrium
models of sorting with and without frictions (\citet*{Eeckout/Kircher:11:ReStud},
\citet*{LopesDeMelo:18:JPE}).\footnote{\citet*{Gautier/Teulings:06:JEEA} was an early empirical study that
detected a concave relationship between wages and firm type.} For example, in \citet*{Eeckout/Kircher:11:ReStud}, a low-type worker
can receive a lower wage at a high-type firm since the worker must
implicitly compensate the high-type firm in equilibrium for forgoing
the opportunity to fill a vacant job with a higher-type worker.\footnote{There are other reasons wages may be non-monotonic in firm type. In
\citet*{Postel-Vinay/Robin:2002:Ecta}, workers may be willing willing
to accept lower wages at higher type firms when they expect to receive
higher wages in the future. } 

 In his
seminal 1973 paper on the marriage market, Becker argued that when
the match production function is supermodular\footnote{When $f$ is differentiable, (strict) supermodularity is equivalent
to $\partial^{2}f(h,k)/\partial h\partial k>0$. } and utility is transferable
between matched agents, high types can outbid low types for the best
partners, leading to an equilibrium with positive assortative matching. In the same paper, Becker noted that sorting can also arise in an non-transferable
utility (NTU) framework when the payoffs of the agents on both sides
of the market are monotonic in the other agent's type. To explain why, Becker invokes the notion of pairwise stability \citet*{Gale/Shapley:62:AMM}.\footnote{This insight comes to us by way of \citet*{Chade/Eeckhout/Smith:17:JEL}'s
excellent review of the search and matching literature. } To illustrate, consider an economy with four agents where a low-type
firm is paired with a high-type worker and vice-versa. Such a matching
is unstable when high types are preferred, because both high-types
will agree to abandon their low-type partners for one another. NTU
arises naturally in our model from the assumption that wages for any
matched pair are determined exogenously (in fact, by a Nash bargaining
assumption). In a special case of our model in which the preferences of agents are indeed
monotonic, sorting - and some inequality - may emerge. In this case,
complementarities are not necessary for sorting but merely amplify
the effects of sorting, since interactions between worker and firm
types in the wage function lead to more wage dispersion than when
such interactions are absent. 

Search and matching models have emerged as the leading alternative
to the AKM framework for studying sorting in labour markets.\footnote{\citet*{Hagedorn/Law/Manovskii:17:Ecta}, \citet*{Bagger/Lentz:2018:ReStud},
\citet*{Lise/Meghir/Robin:16:RED} and \citet*{LopesDeMelo:18:JPE}
all find evidence of positive sorting when an AKM approach finds negligible
sorting.} In this literature, the standard matching technology is one that
converts aggregates of vacancies and unemployed workers into matches.
Although treating matching at the aggregate level simplifies the analysis
considerably, any strategic interdependence that may be present in
the matching process is assumed away (\citet*{Chade/Eeckhout/Smith:17:JEL}).
One contribution of this paper is to develop and estimate a model
that takes such strategic interdependence in the matching process
seriously. Capturing such interdependence is desirable, since in many labour markets the impact of an individual's decisions on the outcomes of other workers are highly relevant and cannot be ignored.\footnote{For example, a worker's decision to get a master's degree in finance
will not only affect the likelihood that he gets a job at an investment
bank, but also the likelihood that his competitors get the job.} In the equilibrium of our model, (and in contrast to standard search models), the probability that a
worker matches to a given firm typically depends on the decisions
of all the other agents in the economy. Another key facet of search
models is that workers direct their job search based on the wages
that employers set for them. However, recent studies of online
job markets have found that it is relatively uncommon for positions
to explicitly post wages.\footnote{For example, \citet*{Marinescu/Wolthoff:16:WP} study the role of
job titles in directing the search of workers report that only 20\%
of job the advertisements CareerBuilder.com report a wage. } Another way in which this paper differs from the traditional search
literature is that we do not require workers do not observe posted
wages directly. Instead, workers know the underlying distributions
of job characteristics and the matching process prior to simultaneously
investing in education. In this sense, the worker's decision to invest
in education is the channel by which workers are able to direct their
search.

The framework in this paper supposes that each equilibrium gives rise to a single
large matching between workers and firms.\footnote{This contrasts with cases in which the researcher sees many independent
copies of games involving few players, such as those studied by \citet*{Bresnahan/Reiss:91:JPE},
\citet*{Ciliberto/Tamer:09:Ecta}, \citet*{Berry:92:Ecta} and many
others. See \citet*{Xu:18:IER}, \citet*{Song:14:WP}, \citet*{Menzel:16:ReStud}
for more papers discussing the estimation of large Bayesian games. } Under familiar assumptions (e.g., iid and separable private information),
we follow similar arguments to \citet*{Aguirregabiria/Mira:19:WP}
to prove that an equilibrium exists. The setup here, however, also allows
us to provide sufficient conditions for equilibrium uniqueness. 

This paper is also part of the literature concerned with estimating
cross-sectionally dependent observations.  In our setup, the observed
matching of workers to firms exhibits cross-sectional dependence of
an unknown form due to the matching process. This means that asymptotic
inference approaches that appeal to the the law of large numbers and
central limit theorems will not work. The approach we pursue builds
on \citet*{Kim/Schwartz/Song/Whang:19:Econometrics} which discusses
how inference in structural matching models are possible when knowledge
of the matching process can be used to characterize the joint distribution
of the observed matching. This paper shows how such a simulation-based
inference approach, cumbersome when the dimension of the parameter
space is high dimensional or complex, is useful for estimating a subset
of the parameters in structural models with cross-sectional dependence.\footnote{The simulation-based approach used in the second-stage of the inference
procedure is known as a Monte Carlo test. Monte Carlo tests have a
history in econometrics dating back at least to the 1950s, as discussed
by \citet*{Dufour/Khalaf:01:TE} in their overview of the technique.}

Section \ref{subsec:Some-Implications-Of} provides intuition on how our model captures the relationship between matching frictions, sorting, and inequality. In particular, we illustrate how a fall in matching frictions can yield two opposing impacts on wage inequality via their effects on sorting and the supply of highly educated workers. In the model parameterization considered, a fall in frictions leads to both an increase in the equilibrium supply of highly educated workers and an increase in positive assortative matching between workers and firms. That is, the latter sorting effect increases wage inequality while the former supply effect acts in the opposite direction.  In general, the impact of matching frictions on wage inequality are more pronounced when worker and firm types are complements in the match production function.
The section also illustrates how a fall in information frictions can lead to a dramatic rise in
the education wage premium through sorting while at the same time, a much more modest
increase in the supply of highly educated workers. Thus, changes in
informational frictions may be a useful way to explain a puzzling
empirical findings concerning the relationship between wage premia
and educational attainment.\footnote{See \citet*{Card/Lemieux:01:QJE}.} 

Section \ref{sec:section_two_structural_model} introduces the model
of two-sided labour market matching with frictions. In the baseline
model of Section \ref{sec:baseline}, workers and firms with exogenous
characteristics match with one another and split the match surplus
according to a Nash bargaining rule. The rest of the paper is organized
as follows. Section \ref{sec:StructuralModel}
extends the baseline model to allow for endogenous worker characteristics
- after observing their type, workers simultaneously invest in education
prior to entering the labour market. Section \ref{sec:section_3_econometric_inference}
outlines an approach for inference on the parameters of the structural
model of Section \ref{sec:StructuralModel}. Section \ref{sec:MonteCarlo} presents a small simulation study illustrating the finite sample size and power  performance of the first-stage inference. Section \ref{sec:conclusion_c1} concludes. Mathematical proofs are confined to two appendices: Appendix A contains results concerning the existence and uniqueness of the equilibrium of the game described in Section \ref{sec:StructuralModel} while Appendix B contains supplemental results relating to the first-stage inference on preferences.

\begin{singlespace}

\section{The Labour Market As a Two-Sided Matching Market\label{sec:section_two_structural_model}}
\end{singlespace}

Our goal is to study the distribution of education and wages using
separate cross-sections of matched employer-employee data. The first
subsection introduces the core elements of the model that will serve
as the basis for the structural model in the second subsection. 

\subsection{Baseline model}\label{sec:baseline}

Let $N_{h}=\{1,...,n_{h}\}$ be the set of workers and $N_{f}=\{1,...,n_{f}\}$
be the set of firms, where $n_{h}$ and $n_{f}$ are used to denote
the total number of workers and firms, respectively. Each worker seeks
one job and each firm seeks to hire one worker.

The matching of workers to firms will be determined by the preference
rankings of workers and firms. Workers value the capital of firms,
$K=(K_{j})_{j\in N_{f}}$, and firms value the human capital of workers
$H=(H_{i})_{i\in N_{h}}$, where $K_{j}$ and $H_{i}$ are scalars.
Any worker $i$ who is matched with firm $j$ receives wage $w_{ij}\geq0$
while firm $j$ receives profit $\rho_{ji}\geq0$, where both wages and profits may also depend on a parameter,  $\theta\in\mathbf{R}^{d}$.\footnote{In this setup, $\theta$ represents the preferences of both workers
and firms. As we will see, $w_{ij}$ and $\rho_{ji}$ depend on the
output of worker $i$ at firm $j$, and the production function that
gives rise to this output will depend on a part of $\theta$. } Since our framework supposes that wages and profits are always non-negative
for any worker and firm that could match, we will assume throughout
the paper that no agent will ever unilaterally dissolve a match to
become unmatched. This requirement that any matching satisfy an individual
rationality constraint is embodied in the following condition:\footnote{The current setup is tailored to settings where the researcher has
at least one cross-section of matched employer-employee data and the
agents who are unmatched are not of primary interest in the analysis.
An interesting (and challenging) extension of the current framework
would accommodate the possibility of unmatched agents, and hence unemployment.}

\bigskip{}

\noindent \textbf{Condition} \textbf{IR} (Individual rationality of
matches): \textit{For each $i\in N_{h}$, and $j\in N_{f}$ $w_{ij}\geq0$
and $\rho_{ji}\geq0$. }

\bigskip{}

Based on the values of $\rho_{j}=(\rho_{ji})_{i\in N_{h}}$ each firm
$j$ can construct preference rankings over the workers. We suppose
that if the firm is ever indifferent between one or more workers,
then the firm picks preference rankings over these workers at random.
Next, we introduce a condition on the worker's wage function that
will grant us a natural economic interpretation of the matching process (along with our
notion of information frictions). \bigskip{}

\noindent \textbf{Condition} \textbf{H} (Homogeneous worker preferences):
\textit{For each $i\in N_{h}$, the wage of worker $i$ is  increasing in the capital of their matched firm.}


\bigskip{}

The condition is tantamount to a notion of worker preference homogeneity,
implying that all workers prefer higher capital firms. Supposing that workers accurately observe the capital of firms, the condition  implies that a matching algorithm in which the highest capital firm,
$j_{1}$, choose his preferred worker, $i_{1}\in N_{h}$, the second
highest capital firm, $j_{2}$, choose his preferred worker $i_{2}\in N_{h}\backslash\{i_{1}\}$
and so on is an example of the \textit{serial dictatorship} mechanism and would produce a stable matching.\footnote{\citet*{Satterthwaite/Sonnenschein:81:ReStud}. See Section 2.2. of \citet*{Roth/Sotomayor:92}. } 

In order to build a model that accounts for the possibility of mismatches
between workers and firms, we suppose that information frictions are present in the market. Specifically, we suppose that workers do not directly observe realizations of the firm's capital. Instead, each worker sees $v=(v_{j})_{j\in N_{f}}$, where $v_{j}$
is a `noisy' measure of firm $j$'s capital. In particular, suppose that
workers see 
\begin{equation}
v_{j}=\beta K_{j}+\eta_{j},\label{eq:section_two_v_functional_form}
\end{equation}
for each $j$, where $\beta\in B$, $B\subset\mathbf{R}$ is the
parameter space of $\beta$, and $\eta_{j}$
is a random variable that is independent across $j$. The size of
the variance of $\eta_{j}$ relative to the magnitude $\beta$ represents
the magnitude of information frictions in the matching process. It is clear
that when $\beta$ is zero and the variance of $\eta_{j}$ is positive,
then this setup yields random matching from firm to worker the characteristics,
since variation across firm capital plays no role in determining the
realizations of $v$. Furthermore, when $\beta\neq0$ and $\text{Var}(\eta_{j})=0$,
it will be as if firm capital is observed by the worker, since $v_{j}$
is determined entirely by the firm's capital. In the latter case,
when $\beta>0$ workers would favour firms with the largest realizations
of $v$, while in the case that $\beta<0$, workers would favour firms
with the smallest realizations of $v$. However, even in the case
that $\text{Var}(\eta_{j})>0$, $v_{j}$ still conveys some useful
information to the worker under certain circumstances. To see this, suppose that $v_{j}$'s follow equation \ref{eq:section_two_v_functional_form} with
$\beta>0$ and let $\eta_{j}$'s be iid\footnote{We will impose such an assumption in a later section.}. Then, any worker who sees $v_{j1}$
exceed $v_{j2}$ will prefer matching with Firm $j_{1}$ over Firm
$j_{2}$, since the worker recognizes that the distribution of $K_{j1}$ conditional on $v_{j1}=\tilde{v}_{j1}$
stochastically dominates the distribution of of $K_{j2}$ conditional
on $v_{j2}=\tilde{v}_{j2}$ when the worker observes $\tilde{v}_{j1}>\tilde{v}_{j2}$.

The following condition specifies the matching process we will use
throughout the paper. 

\bigskip{}

\noindent \textbf{Condition SD }(Matching process): \textit{The matching
of workers to firms in the economy arises as follows. The highest
$v$ firm, $j_{1}$, chooses his preferred worker, $i_{1}\in N_{h}$,
the second highest $v_{2}$ firm, $j_{2}$, chooses his preferred
worker $i_{2}\in N_{h}\backslash\{i_{1}\}$, and so on, until the
lowest $v$ firm, $j_{n_{f}}$, chooses his preferred worker among
those not chosen by any higher ranked firms.}

\bigskip{}

One way of understanding this matching algorithm in economic terms
is to consider the following thought experiment. Imagine a situation
in which a group of job-seekers have assembled in a large room on
the day of a job fair. Workers do not observe the true quality of
any of the firms, (represented by $K$), but they do see each firm's
value of $v$. When $\beta>0$ and $\eta_{j}$'s are iid, each worker
is happiest to match with the highest $v$ firm, since the distribution
of capital associated with the highest $v$ firm stochastically dominates
the distribution of capital associated with any of the lower $v$
firms. A procedure in which the highest $v$ firm, $j_{1}$, chooses
his preferred worker, $i_{1}\in N_{h}$, the second highest capital
firm, $j_{2}$, chooses his preferred worker $i_{2}\in N_{h}\backslash\{i_{1}\}$
and so on, will have no complaints from any of the participants at
the job fair \textendash{} that is, until uncertainty associated with
$K$ is revealed. In this world, agents will typically have more regret
(and hence a greater desire to rematch) when the frictions in $v$
are large. However, rematching is outside the scope of the model. 

Next, we add some further structure to wages and profits. In particular,
we will assume that the payoffs for any two matched agents follow
a Nash bargaining structure.  Let $\tau\in(0,1)$ be the bargaining weight. A worker $i$ who matches
with a firm $j$ receives
\begin{eqnarray}
w_{ij} & = & \tau f(H_{i},K_{j})+(1-\tau)g(H_{i})\text{ and}\label{eq:preferences}\\
\rho_{ji} & = & (1-\tau)\left(f(H_{j},K_{j})-g(H_{i})\right),\nonumber 
\end{eqnarray}
where $f$ is the worker-firm output function and $g(H_{i})$ is an
outside option function, both of which may depend on elements of $\theta$. In a subsequent section, we will allow worker covariates,
$X_{i}$, to effect wages through the outside option function, $g.$\footnote{$X_{i}$'s have support $\mathcal{X}\subset\mathbf{R}^{d}$, where
$d$ is an integer greater than or equal to one.} The following condition requires $f$ to satisfy some intuitive properties
with respect to the worker and firm capital variables. 

\bigskip{}

\noindent \textbf{Condition F} (Production function):\textit{ $f$
is increasing in human capital and firm capital.}

\bigskip{}

Condition F merely requires that more capital leads to more output
- it does not impose that the worker and firm attributes be complements
in $f$. Section \ref{sec:StructuralModel}
goes into further detail about the role of $f$ in this model. 

\bigskip{}

\subsection{Frictional Matching Model with Worker Investments\label{sec:StructuralModel}}

We now introduce a structural model where workers simultaneously invest
in education prior to the serial dictatorship matching process as
outlined in the previous section. A general overview of the matching
process is as follows: i) workers, observing only their type, simultaneously
choose a level of education, ii) $v$ is realized, iii) firms, seeing
only the education of workers, match according to Condition SD. 

Although firms select their preferred workers in the serial dictatorship
phase after constructing preference rankings over the workers, firms
are not considered strategic agents within the context of the investment
game itself.  

There are $n_{h}$ players indexed by $i\in N_{h}$. Each player chooses
an education level, $h_{i}$, from the discrete set $\mathcal{H}\equiv\{1,...,J\}$
to maximize their expected payoff. Let $\lambda=(\theta',\beta)$,
where $\beta$ is the matching frictions parameter and $\theta\in\mathbf{R}^{d}$
is a preference parameter. The payoff function of player $i$ comprises
the wage less a cost of education,
\begin{equation}
u(h_{i},h_{-i},x_{i},k,\eta,\varepsilon_{i};\lambda)=\omega(h_{i},h_{-i},x_{i},k,\eta;\lambda)-c(h_{i},x_{i},\varepsilon_{i};\lambda),\label{eq:payoff_function}
\end{equation}
where $h_{-i}\in\mathcal{H}_{-i}$ are the choices of the other agents,\footnote{Since the set of pure strategies for each agent is $\mathcal{H}$,
it follows that $\mathcal{H}_{-i}=\mathcal{H}^{n_{h}-1}$ for each
$i$, where $\mathcal{H}^{n_{h}-1}$ denotes the $(n_{h}-1)$-ary
Cartesian power of $\mathcal{H}$. } $x_{i}\in\mathcal{X}$ and $\varepsilon_{i}\in\mathbf{R}^{J}$
are the private information of worker $i$, and $k\in\mathbf{R}^{n_{f}}$
and $\eta\in\mathbf{R}^{n_{f}}$ are vectors of exogenous firm variables
that are unobserved by the workers. Although $\varepsilon_{i}$ and
$x_{i}$ are private information of the worker, we will assume $x_{i}$
is observed by the econometrician in a subsequent section. The variable
$\varepsilon_{i}$ represents the worker's private cost associated
with each of the $J$ education levels. In Section \ref{subsec:3_2_First-stage-estimation-of_theta}
we will supply explicit assumptions on worker and firm information
that illustrates why, given the matching process, the components of
the payoff function depend on model's underlying variables in the
way stipulated by equation (\ref{eq:payoff_function}). 

We now provide additional conditions that establish the existence
of a Bayesian Nash equilibrium for our game (which we prove in Section
\ref{sec:equilibrium}). 
\noindent \begin{assumption}\label{AssIndep}
\noindent \textit{(a) $K_{j}$'s, $\eta_{j}$'s are independent across $j$. $X_{i}$'s, $\varepsilon_{i}$'s are independent across $i$. $X$, $K$, $\varepsilon$, and $\eta$ are independent. (b) $\varepsilon_{i}$'s are continuously distributed.}
\end{assumption}\begin{assumption}\label{AssSep}\textit{The
cost function is separable in private information}: 
\[
c(h_{i},x_{i},\varepsilon_{i};\lambda)=c_{0}(h_{i},x_{i};\lambda)+\varepsilon_{i}'d(h_{i}),
\]
\textit{ where $d(h_{i})$ is a $J$-dimensional vector with one in
the $h_{i}$-th row and zero otherwise.} 
\end{assumption}

The assumptions of separability and independence are common in the structural
literature.\footnote{For example, see the discussions in \citet*{Kasahara/Shimotsu:08:JoE} and \citet*{Xu:14:EJ}.} In Section \ref{sec:equilibrium},
we show that Assumptions \ref{AssIndep} and \ref{AssSep} are sufficient for establishing the
existence of the Bayesian Nash equilibrium for the game of this section.
For now, we will provide some intuition into the worker's education
decision problem. First, we define the set of pure strategies as
$\sigma=\{\sigma_{i}(x_{i},\varepsilon_{i}):i\in N_{h}\}$ where $\sigma_{i}$
is a function that maps from $\mathcal{X}\times\mathbf{R}^{J-1}$
into $\mathcal{H}$. Assumption \ref{AssSep} says that we can write the expected
utility of agent $i$ with covariates $x_{i}$, who chooses $h_{i}$
under beliefs $\sigma$ as 
\begin{eqnarray}
U_{i}(h_{i},x_{i},\sigma,\varepsilon_{i}) & = & \tilde{U}_{i}(h_{i},x_{i},\sigma)+\varepsilon_{i}'d(h_{i}),\label{eq:EU}
\end{eqnarray}
where the first term in the expected utility is 
\begin{eqnarray}
\tilde{U}_{i}(h_{i},x_{i},\sigma) & = & \sum_{h_{-i}\in\mathcal{H}_{-i}}\tilde{u}_{i}(h_{i},h_{-i},x_{i})P_{-i}(h_{-i}|\sigma),\label{eq:FTEU}
\end{eqnarray}
and
\begin{equation}
\tilde{u}_{i}(h_{i},h_{-i},x_{i})\equiv\tilde{\omega}_{i}(h_{i},h_{-i},x_{i})-c_{0}(h_{i},x_{i}),\label{eq:expected_utility_fn}
\end{equation}
where $\tilde{\omega}_{i}(h_{i},h_{-i},x_{i})$ is given by
\begin{align*}
\tilde{\omega}_{i}(h_{i},h_{-i},x_{i}) & =\mathbf{E}[\omega(H_{i},H_{-i},X_{i},K,\eta;\lambda)|H_{i}=h_{i},H_{-i}=h_{-i},X_{i}=x_{i}],
\end{align*}
 and expectation is taken with respect to the distributions of $K$
and $\eta$.  By Lemma (\ref{lemma_1_expected_utility_representation})m we can rewrite equation \ref{eq:FTEU} as
\begin{eqnarray*}
\tilde{U}_{i}(h_{i},x_{i},\sigma) & = & \sum_{h_{-i}\in\mathcal{H}_{-i}}\tilde{u}_{i}(h_{i},h_{-i},x_{i})\prod_{j\in N_{h}\backslash\{i\}}P_{j}(h_{j}|\sigma_{j}).
\end{eqnarray*}

Throughout this paper, we will consider the case in which the wages
of workers are determined by Nash bargaining. As in equation \ref{eq:preferences},
we will suppose that firm capital only enters the worker's payoff
through the production function.  Denote $\mathcal{M}(i)$ as the identity of the firm that worker $i$ matches to as a result of the matching process, and $K_{\mathcal{M}(i)}$ as the level of capital associated with firm $\mathcal{M}(i)$. Under these assumptions, we may write $\tilde{\omega}_{i}(h_{i},h_{-i},x_{i})$
as 
\begin{eqnarray} \tilde{\omega}_{i}(h_{i},h_{-i},x_{i}) & = & \tau\tilde{f}_{i}(h_{i},h_{-i})+(1-\tau)g(h_{i},x_{i}),\label{eq:expected_wage} \end{eqnarray}where
\begin{align*}
\tilde{f}_{i}(h_{i},h_{-i}) & =\mathbf{E}[f(H_{i},K_{\mathcal{M}(i)})|H_{i}=h_{i},H_{-i}=h_{-i},X_{i}=x_{i}],
\end{align*}
the expectation is taken with respect to the distributions of
$K$ and $\eta$, and we have allowed the worker's characteristics to enter the payoff
function through the outside option function, $g$.\footnote{Here, $\tau_{i}=\tau$ for each $i$. The framework here can be extended to incorporate heterogeneity in worker
bargaining positions.   \citet*{Bagger/Lentz:2018:ReStud} emphasize the importance of endogenous search intensity and matching variation
(e.g., \citet*{Postel-Vinay/Robin:2002:Ecta})  in understanding the causes of wage inequality.}

Education affects the worker's expected utility in a number of ways.
The first two are obvious: since $f$ is increasing in $h_{i}$ by
Condition F, the worker who invests in a higher level of education
obtains a higher wage at any firm he matches to. The worker's choice
of education also affects his payoff through the outside option function,
$g$. The novel channel in this setup is that $h_{i}$ also determines
the expected quality of the firm that $i$ matches to. Even though
(as mentioned before) firms in this model are non-strategic agents,
the functional form of the production function, $f$, plays a key
role in determining whether or not firms with different levels of
capital exhibit different preferences for workers of differing levels
of education. To see how $f$ determines whether or not firms' preferences
are heterogeneous, consider the Nash bargaining preferences of a firm
for any worker who chooses education level $h$: 
\begin{equation}
\rho(k,h;\theta)=(1-\tau)(f(h,k;\theta)-\tilde{g}(h;\theta)),\label{eq:firm_preferences}
\end{equation}
where $\tilde{g}=\mathbf{E}g(h,X_{i})$ and the expectation is taken
with respect to the distribution of $X_{i}$.\footnote{Here, we implicitly assume that firms do not observe workers' covariates and rank workers only in terms of their education. We make these assumptions concerning firm information explicit in a subsequent section.} Suppose that $X_{i}$
is iid, $K$ takes two values $k_{1},$ $k_{2}$ and there are two
levels of education, $h_{1}$, $h_{2}$ with $h_{2}>h_{1}$. Let us
denote the set of firms that prefer high education ($h_{2}$) as 
\begin{align*}
M_{2}^{+}(\theta) & =\{m\in\{1,2\}:\rho(k_{m},h_{2};\theta)\geq\rho(k_{m},h_{1};\theta)\}.
\end{align*}

If $f(h,k)$ is of the form $a(h)+b(k)$, where $a$ and $b$ are
two functions that map the capital variables to the real numbers,
then $M_{2}^{+}(\theta)$ will be either $\{1,2\}$ or $\emptyset$.
In this case we say that firms have homogeneous preferences, since
both types of firms in the economy prefer the higher educated workers.
Alternatively, if $f(h,k)$ is of the form $a(h)b(k)$ then $M_{2}^{+}(\theta)$
will be either $\{1,2\}$, $\emptyset$, or $\{2\}$. This is the
case of heterogeneous firm preferences. In this latter case where
$f$ exhibits complementarities in worker and firm types, the set
of firms types that prefer high to low education is more finely partitioned.
Moreover, the presence or absence of complementarities will play a
key role in determining the severity of wage inequality. More general
than all these points, however, is the following fact about the model:
as long as $k$ appears somewhere in $f$, $k$ does not have to interact
directly with $h$ in $f$ for the information frictions represented
by $\beta$ to matter in worker's investment decision. 

\subsection{Some Implications of Frictional Matching Model\label{subsec:Some-Implications-Of}}

In this section, we explore some key features of the model. We will
suppose that the functional forms, underlying distributions, and firm
preferences are such that firms always strictly prefer higher educated
workers. In the following subsection, we will illustrate sorting without
any direct interactions between worker and firm types in the production
function. 

\subsubsection{Sorting Without Complementarities}

In Figure 1 and Figure 2, we compare the equilibrium probability of
investing in education and the equilibrium Gini coefficient for a
range of the friction parameters under two specifications of the production
function: Specification 1 allows direct interaction between worker
and firm types, $f=\theta_{1}hk$, while such interactions are absent
in Specification 2, $f=\theta_{1}(h+k)$. Each point on the plot is the average of 500 simulations of endogenous
variable from the equilibrium of the model. The outside option parameter
is set to $\theta_{2}=(-.75,.25,.5)$. There are 500 workers and firm positions.
In Specification 1, the high value of $\theta_{1}$ is 3, and the
low value of $\theta_{1}$ is 1. In Specification 2, the high value
of $\theta_{1}$ is 2, and the low value of $\theta_{1}$ is 1.
There are two levels of of firm capital: $K=1/2$ and $K=1$. The
fraction of each type of firm is .5 in the economy.\footnote{Across all the specifications, we set the  outside option function to be $g =\exp(h\cdot x\theta_{2})$.}

A number of implications are straightforward: the equilibrium probability
of investing in high education is higher when $\theta_{1}$ is higher
and frictions are lower. When $\theta_{1}$ is higher, workers will
be compensated more for higher levels of education. When $\beta$
is higher, the probability of matching to a higher type firm when
they choose high education is higher. 

The effect of increasing $\beta$ (lowering matching frictions) on
both the education and wage inequality is typically much more dramatic
in Specification 1. A rise in $\beta$ (a lessening in matching frictions)
increases sorting in both specifications. In Figure 1, the correlation
between worker and firm types rises from approximately zero to 45\% when $\theta_{1}$
is high, but from zero to 70\% when $\theta_{1}$ is lower; in Figure
2, the correlation between worker and firm types rises from zero to
51\% in the high theta case whereas it rises from zero to 68\% in
the low theta case. The overall level of inequality in Specification
1 is also higher since whatever sorting is present is amplified to
a greater extent when the types interact in the wage equation than
when they do not. 

The high $\theta_{1}$ case in the right hand panel of Figure 1 also
illustrates the role that two competing effects of changes in $\beta$
play on the level of wage inequality. When $\beta$ rises from 0 to
1, the level of inequality increases through the sorting channel.
However, as $\beta$ continues rises, the equilibrium probability
of investing in education also continues to rise. As the fraction
of highly educated surpasses 80\%, the level of inequality begins
to level off (at $\beta=2$) and then begins to fall. This phenomenon
is also illustrated to a lesser degree in the high $\theta_{1}$ case
of the right hand side panel of Figure 2. 

\label{SWC1}
\begin{figure}[H]
\caption{Education and Wage Inequality Under Specification 1}
\begin{centering}
\includegraphics[scale=0.30]{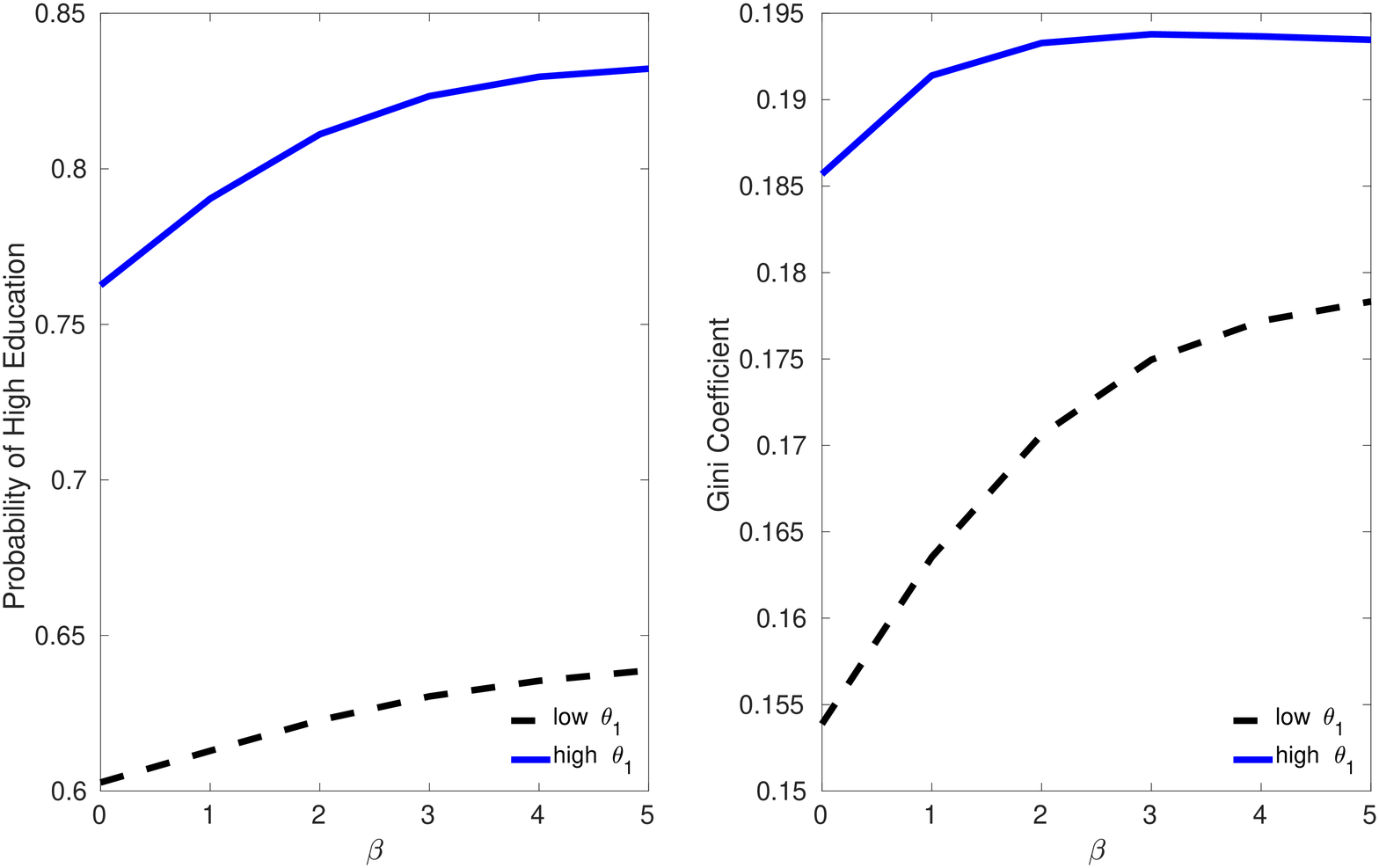}
\par\end{centering}
\caption{Education and Wage Inequality Under Specification 2}
\begin{centering}
\includegraphics[scale=0.30]{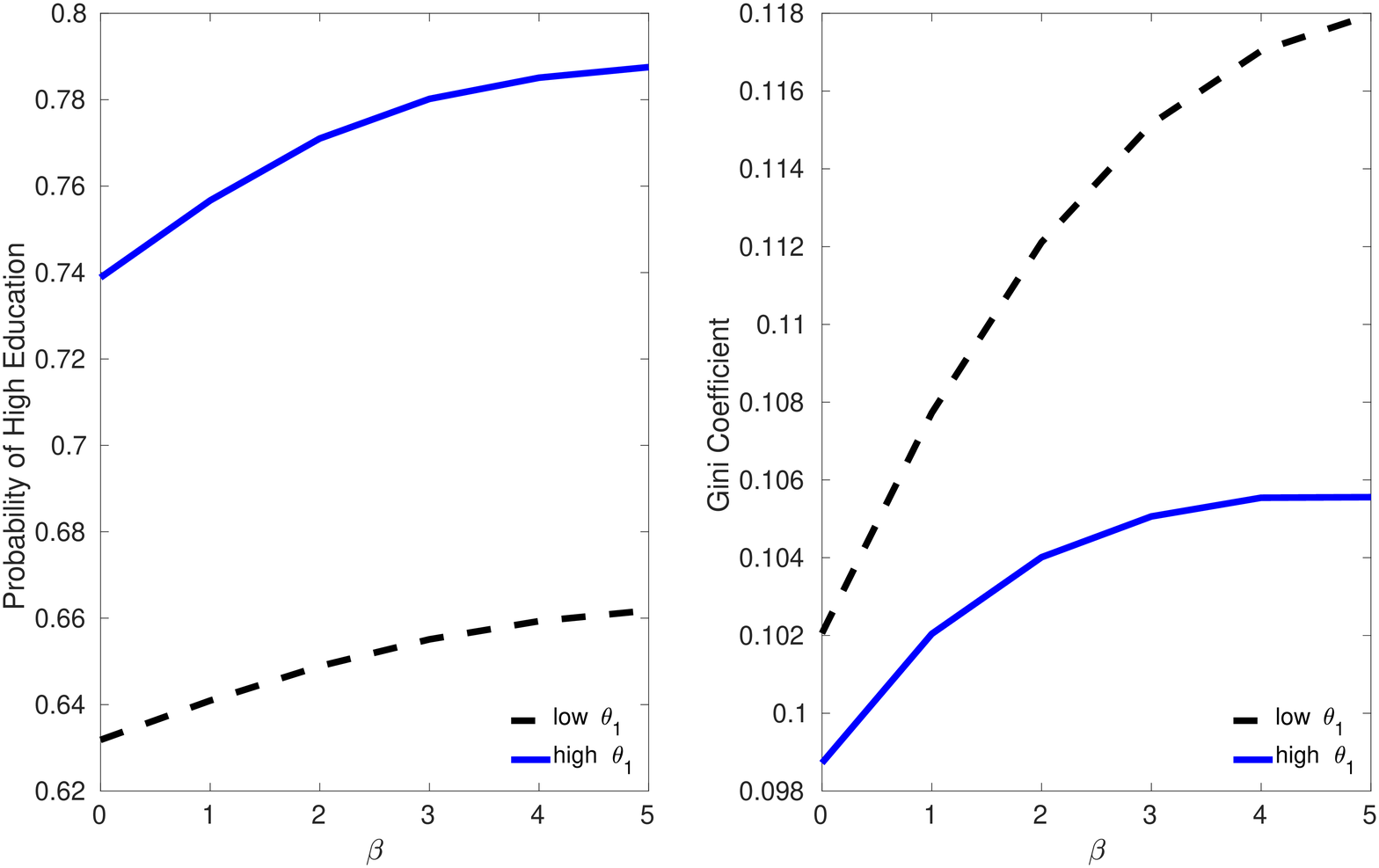}
\par\end{centering}
\parbox{6in}{\footnotesize Figures 1 and 2 plot the equilibrium
probability of high education investment and the Gini coefficient
for a range of values of the matching frictions parameter, $\beta$ in cases where firms all prefer higher-educated workers.
We consider two specifications for the production function: Specification
1 includes interactions between worker and firm types while Specification
2 does not. Lowering matching frictions (increasing $\beta$) increases
the equilibrium level of education across specifications. A rise in
$\beta$ impacts inequality through two competing effects: a sorting
effect that increases inequality and an a supply effect that lowers
inequality. This can be seen most dramatically in Figure 1: as $\beta$
rises past a value of three, the fraction of highly educated rises
more and more and inequality falls, dominating the effects of sorting
on inequality. }
\end{figure}

\subsubsection{Supply of Highly Educated Workers and Education Premia}

In this section, we show how simulation of our static model can capture a puzzling
phenomenon discussed in \citet*{Card/Lemieux:01:QJE}. How can dramatic increases in the
education wage premium lead to only modest increases in the supply
of highly educated workers? The authors note that, over a roughly
30 year period beginning in the early 1970s, the college-high school
wage gap rose considerably in the United States, Canada, and the United
Kingdom, and that this rise occurred mostly for younger workers. They
argue that an important source of this trend is a stagnation in the
rate of educational attainment among workers born in the 1950s and
thereafter.

In Figure 3, we show how this pattern can be driven entirely by changes
in the matching technology over time. The wage premium is measured
as the difference between the average wages of the workers with high
education and the average wages of workers with low education. Each
point on the plot represents the average of 500 simulations of the
model. We use Specification 1, $f=\theta_{1}hk$, under the same setup
as before with only one difference; we choose the low value of $\theta_{1}$
to be 0.7 and the high $\theta_{1}$ to be 2.5. In the case that $\theta_{1}$
is very low, the effect of raising $\beta$ is to dramatically increase
sorting without inducing a large benefit to the workers from acquiring education in equilibrium.

\begin{figure}[H]
\caption{Supply of Highly Educated Workers and Education Wage Premia}
\noindent \begin{centering}
\includegraphics[scale=0.30]{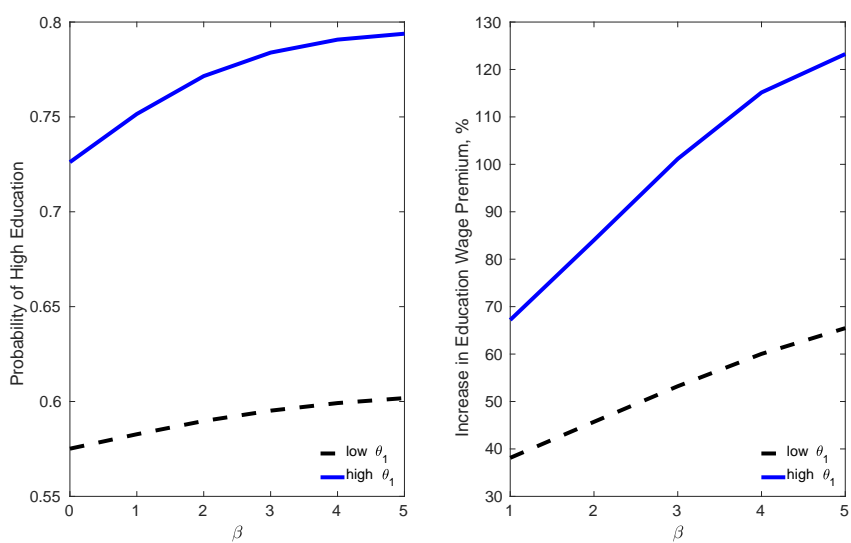}
\par\end{centering}
	\parbox{6in}{\footnotesize Figures 3 offers an explanation to
an empirical puzzle discussed in \citet*{Card/Lemieux:01:QJE}: why are increases
in wage premia not associated with large increases in the supply of
highly educated workers? We plot the equilibrium probability of high
education investment and the returns to education for a range of values
of the matching frictions parameter, $\beta$ when firms prefer higher education. In the case that $\theta_{1}$ is very low, the effect of increasing
$\beta$ is to dramatically increase sorting while keeping the equilibrium returns
to education for any particular worker reasonably low. }
\end{figure}

\section{Econometric Inference\label{sec:section_3_econometric_inference}}

In this section, we outline the general empirical strategy for performing
inference on the underlying model parameters. In Section \ref{sec:section_3_econometric_inference},
we describe how the main model can be used to characterize the observed
distribution of the matching of workers to firm and hence the wages
of all the workers in the economy. The goal is to then use these representations
to construct confidence regions for the preference and matching technology
parameters.

However, if the model is high dimensional, the Monte Carlo inference
approach may be cumbersome to apply in practice. For this reason,
we propose a two-stage inference approach that relies on the construction
of a first-stage confidence interval for a subset of the model parameters.
We demonstrate this approach in practice in Section \ref{subsec:3_2_First-stage-estimation-of_theta}
by estimating the Bayesian game from \ref{sec:StructuralModel}
for fixed values of $\beta$.

\subsection{Two-Stage Inference Accommodating Cross-Sectional Dependence of Observed
Matching\label{subsec:section_3_subsection_1_two_stage_inference}}

The econometrician observes a matching of workers to firms, $\mathbf{M}=(\mathbf{M}(i))_{i\in N_{h}}$,
where for each $i\in N_{h}$, $\mathbf{M}(i)$ takes values in the
set of firms.\footnote{Throughout this paper, we will suppose that the matching is one-to-one
between workers and firms. In practice, ``firms'' in this context
can be viewed as positions at particular firms. } The main challenge associated with inference is the fact that the
distribution of $\mathbf{M}$ exhibits cross-sectional dependence
of a complicated form.  The matching
of workers to firms can be thought of as discrete choice problem on
the part of the firm where the choice sets of firms are endogenously
constrained by the choices of firms with higher $v$-indices, which
depends on $\beta$, $\eta$ and $k$. Hence, the event that worker
$i$ matches to firm $j$ cannot be considered independent from the
event that a worker $i'\neq i$ matches to firm $j$. Also, the fact
that firm preferences may be heterogeneous means we cannot condition
on the $v$-index and firm preferences in a way to remove the cross-sectional
dependence as was done by \citet*{Diamond/Agarwal:17:QE}. 

The econometrician observes the vector $\mathbf{M}\in\mathbf{R}^{n_{h}}$,
which represents a matching of workers to firms. Given the serial
dictatorship matching process, the joint distribution of $\mathbf{M}$
is known up to a parameter. Let $\mathbf{K}=(\mathbf{K}(i))_{i\in N_{h}}$,
where $\mathbf{K}(i)=K_{\mathbf{M}(i)}$; i.e., the capital of the
firm matched to by worker $i$. 

Our model also implies that the finite sample distribution of wages,
$(\mathbf{W}(i))_{i\in N_{h}}$, is known up to a parameter. Under
Nash bargaining (and a specification of the post-match wage function
based off an equation such as \ref{eq:preferences}), we have for
each $i\in N_{h}$
\[
\mathbf{W}(i)=w(H_{i},\mathbf{K}(i)).
\]
We denote all the match-related observables as $\mathbf{Y}=(\mathbf{K},\mathbf{M})$.
$\mathbf{M}$ is observed whenever the researcher has matched employer-employee
data. $\mathbf{K}$ is observed when the researcher can use the matching
data, $\mathbf{M}$, and the firm capital data, $K$, to find the
capital of the firm each worker in the sample is employed at. Using
$\mathbf{Y}$ and worker observables $H$ and $X$, the econometrician
wishes to infer $\lambda_{0}$. 

\subsubsection{Finite Sample Inference on Parameters\label{subsec:sub_sub_section_finite-sample-inference}}

Next, we consider a test statistic that matches the moments of the
distribution of the matched-related observables with their simulated
counterparts. To simplify the exposition, we discuss the construction
of a confidence interval for $\beta_{0}$ alone, i.e., supposing that
we knew the true values of $\theta_{0}$. Denote $R+1$ as the total
number of simulations in the Monte Carlo inference procedure. Drawing
$\eta_{r}$ from some continuous parametric distribution function,\footnote{We will specify a particular parametric family that this distribution belongs to, along with additional assumptions, in Section \ref{subsec:3_2_First-stage-estimation-of_theta}. } we simulate a version of the matching for each $\beta\in B$ and
each $r=1,...,R+1$, which we write as $\mathbf{M}_{r}(\beta)=\{\mathbf{M}_{r}(i;\beta):i\in N_{h}\}$.
The simulated wages are then
\[
\mathbf{W}_{r}(i;\beta)=w(H_{i},K_{\mathbf{M}_{r}(i;\beta)})).
\]
It is convenient to define
\begin{eqnarray*}
\mathbf{Y}_{r}(\beta) & = & \{\mathbf{Y}_{r}(i;\beta):i\in N_{h}\},\\
\mathbf{Y}_{R+1}(\beta) & = & \{\mathbf{Y}_{r}(i;\beta):i\in N_{h},r=1,...,R+1\},\text{ and}\\
\mathbf{Y}_{-r}(\beta) & = & \mathbf{Y}_{R+1}(\beta)\backslash\mathbf{Y}_{r}(\beta).
\end{eqnarray*}
Next, we will propose a test statistic that depends on both the observed
matching data, $\mathbf{Y}$, and the simulated matching data, (along
with simulated versions of this test statistic). That is,
\begin{eqnarray*}
T(\beta)=\phi_{n}(\mathbf{Y},\mathbf{Y}_{R}(\beta)), & \text{ and}\\
T_{r}(\beta)=\phi_{n}(\mathbf{Y}_{r}(\beta),\mathbf{Y}_{-r}(\beta)).
\end{eqnarray*}
An example of such a test statistic is one that compares the observed
joint distribution of worker human capital and matched firm capital
with simulated counterparts. For example, we may consider the test
statistic\footnote{This test statistic is similar to the one used in \citet*{Kim/Schwartz/Song/Whang:19:Econometrics}. See also \citet{Diamond/Agarwal:17:QE}.}
\begin{align*}
T(\beta) & =\frac{1}{R}\sum_{r=1}^{R}\max_{h,m}|\hat{P}(h,m;\mathbf{Y},H)-\hat{P}(h,m;\mathbf{Y}_{r}(\beta),H)|,
\end{align*}
where \begin{align*}
\hat{P}(h,m;\mathbf{Y},H) & =\frac{1}{n_{h}}\sum_{i\in N_{h}}1\{H_{i}=h,K_{\mathbf{M}(i)}=m\}.
\end{align*}
\noindent  That is, $\hat{P}$ is an $J\times M$ matrix\footnote{In this example, we are implicitly assuming that the distribution
of $K$ is discrete and has $M$ support points. We will make this assumption explicit in a subsequent section.} whose $(j,m)$ element
is the estimated probability that a worker of education level $h_{j}$
matches to a firm of capital level $m$. $\hat{P}_{r}(\beta)$ is
defined similarly to $\hat{P}$, except we replace the observed matching
with the $r$th simulated matching, $\mathbf{M}_{r}(\beta)$.

Using our test statistic, we may compute a confidence region for
$\beta$ as 
\[
C_{\alpha,R}^{\beta}=\{\beta\in B:T(\beta)\leq c_{\alpha,R}(\beta)\},
\]
where the critical value is computed as the $(1-\alpha)$ -quantile
of the empirical distribution of $\{T_{r}(\beta):r=1,...,R\}$: 
\begin{eqnarray*}
c_{\alpha,R}(\beta) & = & \inf\left\{ c\in\mathbf{R}:\frac{1}{R}\sum_{r=1}^{R}1\{T_{r}(\beta)\leq c\}\geq1-\alpha\right\} .
\end{eqnarray*}
Under Assumption \ref{Assumption-4}, it can easily be shown that finite sample inference
on $\beta_{0}$ satisfies $P\{\beta_{0}\in C_{\alpha,R}^{\beta}\}\geq1-\alpha$
when the procedure outlined above involves the true parameter, $\theta_{0}$. 

In practice, we do not know the true value of $\theta_{0}$. In situations
in which the full parameter vector $\lambda_{0}$ is not very large,
it may be feasible to construct a $(1-\alpha)100\%$ confidence region
for this parameter that exhibits finite sample validity. That is,
we construct
\begin{equation}
C_{\alpha,R}^{\lambda}=\{\lambda\in\Lambda:T(\lambda)\leq c_{\alpha,R}(\lambda)\},\label{eq:CI_for_full_parameter_vector_finite_sample_validity}
\end{equation}
where $T(\lambda)$ and $c_{\alpha,R}(\lambda)$ are defined analogously
to $T(\beta)$ and $c_{\alpha,R}(\beta)$. In the case
that $\Lambda$ is high-dimensional, the finite sample
procedure outlined above may not be practical due to the unreasonable
computational cost. In the following subsection, we explore a two-stage
inference approach that admits inference on $\beta_{0}$ when the
researcher is able to construct a first-stage confidence region for
a subset of the parameters, $\theta_{0}$. 

Note that plugging in a consistent estimator of $\theta_{0}$,
$\hat{\theta}_{n}$, for the true value in inference procedure outlined
above will generally not lead to valid inference on $\beta_{0}$.
This is because there is no reason to expect that plugging in $\hat{\theta}_{n}$
for $\theta_{0}$ will make the distribution of the simulated matching,
$\mathbf{M}_{r}$, equal to the distribution of the observed matching,
$\mathbf{M}$. The fact that $\mathbf{M}_{r}$ is not equal in distribution
to $\mathbf{M}$, in turn implies that $\mathbf{K}_{r}$ does not
follow the same distribution as $\mathbf{K}$. The severe consequences
of estimation error in $\hat{\theta}_{n}$ occur because the firm
preferences are typically misspecified at all values of $\theta$
other than the true value, $\theta_{0}$. Moreover, this problem is
not alleviated by conditioning on $H,K,$ or exogenous variables.
In the following section, we discuss a general two-stage inference
approach that can be used when the econometrician can construct an (asymptotically)
valid confidence first-stage confidence interval for $\theta_{0}$.
In Section \ref{sec:StructuralModel},
we extend our baseline economic model of Section \ref{sec:section_two_structural_model}
in a manner that admits the application of this two-stage inference
approach to our setup.

\subsubsection{Two-Stage Inference on $\beta$ using Test-Inversion Confidence Interval }

Suppose that we wish a ($1-\alpha$)-level asymptotic confidence interval
for $\beta_{0}$, and can construct a confidence interval for $\theta_{0}$.
Let us denote the test statistic and its simulated counterpart from
the previous section, where the $\theta$ arguments make explicit
the test statistic's dependence upon a given value of $\theta\in\Theta$:
\begin{eqnarray*}
T(\beta;\theta_{0},\theta_{1})=\phi_{n}(\mathbf{Y}(\beta_{0},\theta_{0}),\mathbf{Y}_{R}(\beta,\theta_{1})), & \text{ and}\\
T_{r}(\beta;\tilde{\theta},\theta_{1})=\phi_{n}(\mathbf{Y}_{r}(\beta,\tilde{\theta}),\mathbf{Y}_{-r}(\beta,\theta_{1})).
\end{eqnarray*}
Note that according to the notation we used in the last section we
have $T(\beta;\theta_{0},\theta_{0})=T(\beta)$. 
Our inference on
$\beta$ proceeds in two steps:

\noindent \textbf{Step 1. }Using the first stage estimates of $\hat{\theta}(\beta)$,
we construct a confidence region for $\theta_{0}$, $\hat{C}_{\alpha/2}(\beta)$,
with $(1-(\alpha/2))$ asymptotic coverage.

\noindent \textbf{Step 2.} Next, we construct a test statistic that
doesn't involve $\theta$.\textbf{ }Define 
\begin{eqnarray*}
S(\beta) & = & \inf_{\theta_{1}\in\hat{C}_{\alpha/2}(\beta)}T(\beta;\theta_{0},\theta_{1}),\text{ and}\\
S_{r}^{*}(\beta) & = & \sup_{\tilde{\theta}\in\hat{C}_{\alpha/2}(\beta)}\inf_{\theta_{1}\in\hat{C}_{\alpha/2}(\beta)}T_{r}(\beta;\tilde{\theta},\theta_{1}).
\end{eqnarray*}
We now construct a confidence set for $\beta$ as
\begin{equation}
\hat{C}_{\alpha,R}=\{\beta\in B:S(\beta)\leq c_{1-(\alpha/2),R}^{*}(\beta)\},\label{eq:Cbeta_inversion}
\end{equation}
where the critical value $c_{1-(\alpha/2),R}^{*}(\beta)$ is computed
as the $(1-(\alpha/2))$ -quantile of the empirical distribution of
$\{S_{r}^{*}(\beta):r=1,...,R\}$; that is, 

\begin{eqnarray*}
c_{1-(\alpha/2),R}^{*}(\beta) & = & \inf\left\{ c\in\mathbf{R}:\frac{1}{R}\sum_{r=1}^{R}1\{S_{r}^{*}(\beta)\leq c\}\geq1-(\alpha/2)\right\} .
\end{eqnarray*}

The following lemma establishes the asymptotic validity of the two-stage
inference procedure. 
\begin{lemma}
Suppose that the econometrician can construct $\hat{C}_{\alpha/2}(\beta_{0})$
such that
\[
\lim_{n\rightarrow\infty}P\left(\theta_{0}\in\hat{C}_{\alpha/2}(\beta_{0})\right)\geq1-(\alpha/2).
\]
Then
\begin{equation}
\lim_{n\rightarrow\infty}P\left(\beta_{0}\in\hat{C}_{\alpha,R}\right)\geq1-\alpha.\label{eq:Section_3_lemma_1}
\end{equation}
\end{lemma}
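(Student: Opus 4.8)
The plan is to reduce the two-stage procedure to a union-bound argument combining the first-stage coverage guarantee with the finite-sample validity of the Monte Carlo test that would obtain if $\theta_0$ were known. First I would work on the event $E_n \equiv \{\theta_0 \in \hat{C}_{\alpha/2}(\beta_0)\}$, which by hypothesis has $\liminf_n P(E_n) \geq 1-(\alpha/2)$. On $E_n$, since $\theta_0$ lies in the set over which $S(\beta_0) = \inf_{\theta_1 \in \hat{C}_{\alpha/2}(\beta_0)} T(\beta_0;\theta_0,\theta_1)$ takes its infimum, we have $S(\beta_0) \leq T(\beta_0;\theta_0,\theta_0) = T(\beta_0)$. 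Similarly, on $E_n$ the simulated statistic satisfies $S_r^*(\beta_0) = \sup_{\tilde\theta} \inf_{\theta_1} T_r(\beta_0;\tilde\theta,\theta_1) \geq \inf_{\theta_1} T_r(\beta_0;\theta_0,\theta_1)$, evaluated at $\tilde\theta=\theta_0$, and since $\theta_0 \in \hat{C}_{\alpha/2}(\beta_0)$ this infimum is $\leq T_r(\beta_0;\theta_0,\theta_0) = T_r(\beta_0)$ — but that is the wrong direction, so instead I would note $S_r^*(\beta_0) \geq \inf_{\theta_1 \in \hat{C}_{\alpha/2}(\beta_0)} T_r(\beta_0;\theta_0,\theta_1)$, and the ordering I actually want is the one on the critical value: since $S_r^*(\beta_0)$ dominates $T_r(\beta_0;\theta_0,\theta_1)$ for the particular $\tilde\theta = \theta_0$ choice only after the outer $\sup$, the cleaner route is $c^*_{1-(\alpha/2),R}(\beta_0) \geq c_{1-(\alpha/2),R}(\beta_0)$ directly, where the latter is the quantile built from $\{T_r(\beta_0)\}_{r=1}^R$; this follows because $S_r^*(\beta_0) \geq \inf_{\theta_1}T_r(\beta_0;\theta_0,\theta_1)$ is not quite $T_r(\beta_0)$, so one needs the outer supremum at $\tilde\theta=\theta_0$ to recover exactly $\inf_{\theta_1} T_r(\beta_0;\theta_0,\theta_1)$, and then a separate monotonicity step. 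I would handle this carefully below.

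The key chain is then: on $E_n$, if $\beta_0 \notin \hat{C}_{\alpha,R}$ then $S(\beta_0) > c^*_{1-(\alpha/2),R}(\beta_0)$; combined with $S(\beta_0) \leq T(\beta_0)$ and $c^*_{1-(\alpha/2),R}(\beta_0) \geq c_{1-(\alpha/2),R}(\beta_0)$ (the within-$\hat{C}$ critical value of the known-$\theta_0$ statistic), this forces $T(\beta_0) > c_{1-(\alpha/2),R}(\beta_0)$. Hence
\[
P(\beta_0 \notin \hat{C}_{\alpha,R}) \leq P(E_n^c) + P\big(E_n \cap \{T(\beta_0) > c_{1-(\alpha/2),R}(\beta_0)\}\big) \leq P(E_n^c) + P\big(T(\beta_0) > c_{1-(\alpha/2),R}(\beta_0)\big).
\]
For the second term I would invoke the exchangeability of $(T(\beta_0), T_1(\beta_0),\ldots,T_R(\beta_0))$ when the data-generating $\theta$ equals $\theta_0$ — this is the finite-sample Monte Carlo test argument already used earlier in the excerpt (Assumption \ref{Assumption-4} / the $P\{\beta_0 \in C^\beta_{\alpha,R}\} \geq 1-\alpha$ statement), which gives $P(T(\beta_0) > c_{1-(\alpha/2),R}(\beta_0)) \leq \alpha/2$ for every $n$ and $R$. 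Taking $\limsup_n$ and using the first-stage coverage, $\limsup_n P(\beta_0 \notin \hat{C}_{\alpha,R}) \leq \alpha/2 + \alpha/2 = \alpha$, which is the claim.

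The main obstacle is the critical-value inequality $c^*_{1-(\alpha/2),R}(\beta_0) \geq c_{1-(\alpha/2),R}(\beta_0)$: it requires showing that, on $E_n$, the empirical distribution of $\{S_r^*(\beta_0)\}$ is stochastically larger than that of the reference statistics whose $(1-\alpha/2)$-quantile is the benchmark. The subtle point is that $S_r^*$ involves an outer supremum over $\tilde\theta \in \hat{C}_{\alpha/2}(\beta_0)$ — a \emph{random} set — so one cannot simply evaluate at $\tilde\theta = \theta_0$ pathwise unless $\theta_0$ is in it, which is exactly the content of $E_n$. I would therefore be explicit that all three inequalities ($S(\beta_0) \leq T(\beta_0)$, $S_r^*(\beta_0) \geq T_r(\beta_0;\theta_0,\theta_0) = T_r(\beta_0)$ via plugging $\tilde\theta=\theta_0$ into the outer sup and noting $\theta_0$ is feasible for the inner inf, and hence $c^* \geq c$) hold \emph{on the event} $E_n$, and that off $E_n$ we simply bound by $P(E_n^c)$. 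One should also confirm the quantile/indicator bookkeeping: with $c_{1-(\alpha/2),R}$ the $\lceil(1-\alpha/2)R\rceil$-th order statistic, $P(T(\beta_0) > c_{1-(\alpha/2),R}(\beta_0)) \leq \alpha/2 + o(1)$ (or $\leq \alpha/2$ exactly with the standard $(R+1)$-normalization), a routine consequence of exchangeability that mirrors the single-stage argument already asserted in the text.
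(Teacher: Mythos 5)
Your union-bound skeleton --- work on $E_n=\{\theta_0\in\hat C_{\alpha/2}(\beta_0)\}$, bound the rejection probability on $E_n$ by a Monte Carlo validity argument, and absorb $E_n^c$ at cost $\alpha/2$ --- is the same as the paper's, and the inequality $S(\beta_0)\le T(\beta_0;\theta_0,\theta_0)$ on $E_n$ is correct. The gap is the critical-value comparison $c^*_{1-(\alpha/2),R}(\beta_0)\ge c_{1-(\alpha/2),R}(\beta_0)$, which you need in order to transfer $\{S(\beta_0)>c^*_{1-(\alpha/2),R}(\beta_0)\}$ into $\{T(\beta_0)>c_{1-(\alpha/2),R}(\beta_0)\}$ and then invoke the known-$\theta_0$ Monte Carlo test. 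That comparison requires $S_r^*(\beta_0)\ge T_r(\beta_0;\theta_0,\theta_0)$ on $E_n$, and this does not follow from the sup-inf structure: plugging $\tilde\theta=\theta_0$ into the outer supremum gives only $S_r^*(\beta_0)\ge\inf_{\theta_1\in\hat C_{\alpha/2}(\beta_0)}T_r(\beta_0;\theta_0,\theta_1)$, and that inner infimum sits \emph{below} $T_r(\beta_0;\theta_0,\theta_0)$, so the two inequalities point in opposite directions and do not chain. You noticed exactly this (``that is the wrong direction'') and then asserted the inequality anyway in your summary list of ``all three inequalities.'' Nothing in the construction forces the ordering: the inner infimum pulls $S_r^*(\beta_0)$ down relative to $T_r(\beta_0)$ and the outer supremum need not compensate, so $c^*_{1-(\alpha/2),R}(\beta_0)$ may well lie below your benchmark, in which case $T(\beta_0)\ge S(\beta_0)>c^*_{1-(\alpha/2),R}(\beta_0)$ tells you nothing about $T(\beta_0)$ versus $c_{1-(\alpha/2),R}(\beta_0)$.

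The repair is to change the reference family: pair $S(\beta_0)=\inf_{\theta_1\in\hat C_{\alpha/2}(\beta_0)}T(\beta_0;\theta_0,\theta_1)$ not with $T_r(\beta_0)$ but with $V_r\equiv\inf_{\theta_1\in\hat C_{\alpha/2}(\beta_0)}T_r(\beta_0;\theta_0,\theta_1)$. On $E_n$ one has $S_r^*(\beta_0)\ge V_r$ for every $r$ (this is the direction the sup-inf \emph{does} give), hence $c^*_{1-(\alpha/2),R}(\beta_0)$ dominates the $(1-(\alpha/2))$-quantile of $\{V_r\}_{r=1}^R$, and the exchangeability of $\bigl(S(\beta_0),V_1,\dots,V_R\bigr)$ --- coming from the fact that $\mathbf{Y}$ and $\mathbf{Y}_r(\beta_0,\theta_0)$ are generated from the same distribution under the truth --- bounds $P\bigl(\{S(\beta_0)>c^*_{1-(\alpha/2),R}(\beta_0)\}\cap E_n\bigr)$ by $\alpha/2$. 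This is exactly the comparison the paper's displayed chain makes when it passes from $\{\inf_{\theta_1}T(\beta_0;\theta_0,\theta_1)\le c^*\}\cap A_1$ to $\{\sup_{\tilde\theta}\inf_{\theta_1}T_r(\beta_0;\tilde\theta,\theta_1)\le c^*\}\cap A_1$. Your instinct to be explicit about exchangeability, the randomness of the set $\hat C_{\alpha/2}(\beta_0)$, and the quantile bookkeeping is good and goes beyond what the paper writes down, but the argument has to be run on the $\inf_{\theta_1}$-statistics, not on the raw $T_r(\beta_0)$'s.
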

\begin{proof}

By the definition of $\hat{C}_{\alpha,R}$, $P\left(\beta_{0}\in\hat{C}_{\alpha,R}\right)$ is equal to 
\noindent 
\begin{eqnarray}
P\left(S(\beta_{0})\leq c_{1-(\alpha/2),R}^{*}(\beta_{0})\right) & = & P\left(\inf_{\theta_{1}\in\hat{\mathcal{C}}_{\alpha/2}(\beta_{0})}T(\beta_{0};\theta_{0},\theta_{1})\leq c_{1-(\alpha/2),R}^{*}(\beta_{0})\right)\nonumber \\
 & \geq & P\left[\left\{ \inf_{\theta_{1}\in\hat{\mathcal{C}}_{\alpha/2}(\beta_{0})}T(\beta_{0};\theta_{0},\theta_{1})\leq c_{1-(\alpha/2),R}^{*}(\beta_{0})\right\} \cap A_{1}\right],\label{eq:-1}
\end{eqnarray}

\noindent where $A_{1}\equiv\left\{ \theta_{0}\in\hat{C}_{\alpha/2}(\beta_{0})\right\} $.
Then, the right hand side of the right hand side of (\ref{eq:-1}) is greater than or equal
to 
\noindent 
\begin{eqnarray*}
 & & P\left[\left\{\sup_{\tilde{\theta}\in\hat{C}_{\alpha/2}(\beta_0)}\inf_{\theta_{1}\in\hat{\mathcal{C}}_{\alpha_{/2}}(\beta_0)}T_{r}(\beta_0;\tilde{\theta},\theta_{1})\leq c_{1-(\alpha/2),R}^{*}(\beta_{0})\right\}\cap A_{1}\right],\\
 & \geq & P\left(\sup_{\tilde{\theta}\in\hat{C}_{\alpha/2}(\beta_0)}\inf_{\theta_{1}\in\hat{\mathcal{C}}_{\alpha_{/2}}(\beta_0)}T_{r}(\beta_0;\tilde{\theta},\theta_{1})\leq c_{1-(\alpha/2),R}^{*}(\beta_{0})\right)-P\left(A_{1}^{c}\right).
\end{eqnarray*}
 Now since
\begin{eqnarray*}
\lim_{n\rightarrow\infty}P\left(\theta_{0}\notin\hat{C}_{\alpha/2}(\beta_{0})\right) & \leq & \alpha/2,
\end{eqnarray*}
we have
\begin{eqnarray*}
\lim_{n\rightarrow\infty}P\left(\beta_{0}\in\hat{C}_{\alpha,R}\right) & \geq & 1-\alpha.
\end{eqnarray*}
\end{proof}
\noindent

In the following section, we provide assumptions under
which we can construct a confidence
region for $\theta_{0}$ using a maximum likelihood approach. In Section $\ref{sec:identification}$,
we argue that $\theta_{0}$ is identified up to $\beta_{0}$, and
provide standard conditions under which the maximum likelihood estimator
is consistent and asymptotically normal. In Section $\ref{sec:MonteCarlo}$,
we then present a small Monte Carlo study that illustrates how this
estimator can be used as the basis for the first-stage inference on
preferences. In particular, we show how a parametric bootstrap can
be used to construct a $\hat{C}_{\alpha,R}$ with reasonable finite
sample size and power properties.

\subsection{First-Stage Estimation of Preferences \label{subsec:3_2_First-stage-estimation-of_theta}}

In this section, we show how $\theta$ can be estimated for a particular
fixed value of $\beta$. We will write an estimator of such an object
as $\hat{\theta}(\beta)$. The main challenge associated with  this
problem is that of estimating the worker's expected utility from equation
\ref{eq:expected_utility_fn}. The problem is difficult because the
workers must somehow resolve uncertainty associated with the serial
dictatorship matching process in order to compute the expected output
under the equilibrium education choices. In spite of these complications,
it turns out that, under reasonable assumptions, the parameters are
tractably estimable using discrete choice methods with a fixed point
constraint when there are only two education choices. We now provide
and discuss these assumptions. 
\noindent \begin{assumption}\label{AssInfo} 
\noindent 
\textit{(a) Firms observe (i) workers' education decisions,
$H$, and (ii) the distribution of characteristics, $X$. (b) Workers observe (i) the distribution of firm
capital, (ii) the distribution of $\eta$, (iii) the distribution
of $X$, and (iv) the distribution of the number of firms preferring
each education level $h_{j}\in\mathcal{H}$.}
\end{assumption}

Under part (a) of Assumption \ref{AssInfo}, firms do not take covariates into
account when forming their preference rankings over workers. Thus,
workers with the same education level are equally desirable to any
given firm. When worker $i$ considers the desirability of choosing
education $h_{j}$, he need only consider the capital a generic agent
who chooses level $h_{j}$ expects to receive in the matching process.
In many contexts, (a) will be reasonable for a host of variables that
affects the worker's education decision (e.g., marital status, number
of dependent children).\footnote{In some cases in which employers do see these worker characteristics,
they are prohibited from discriminating based on them due to state
or federal anti-discrimination laws.} Part (b) says that workers know only the distribution of firm capital
without knowing the precise realizations of capital. Assumption \ref{AssInfo}
(b) also stresses that the worker's knowledge of the distribution
of capital is not sufficient for knowledge of the distribution of
the number of firms that prefers each education class, which will
turn out to be crucial for our results of this section. 

\noindent \begin{assumption}\label{Assumption-4}

\noindent \textit{(a) $K$ is discrete with probability mass $q=(q_{m})_{m=1}^{M}$, where for $m=1,...,M$,  $q_{m}=P(K=k_{m})$. }

\noindent \textit{(b) $\eta_{j}$'s are iid $N(0,\sigma^{2})$ }
\noindent \textit{(c) $\varepsilon_{i}$'s follow the Type I extreme
value distribution.}
\end{assumption}

Part (a) says the distribution of firm capital has discrete support.
In practice, we can let $M$ be as large as our application requires.
In concert with (b) and the parametric structure for $v$ stipulated
by equation \ref{eq:section_two_v_functional_form}, (a) allows us
to express the unconditional distribution of $v_{j}$ as a mixture
of normals, $G\equiv\sum_{m=1}^{M}q_{m}F_{m}$, where $F_{m}$ is
$N(\beta k_{m},\sigma^{2})$.\footnote{In the simulation sections of the paper we normalize
$\sigma^{2}=1$ when we perform inference on the model parameters.} Part (c) is an assumption on the worker's unobserved costs that allows
us to estimate the model parameters using conventional discrete choice
methods. 

We wish to obtain a convenient representation of each worker's conditional
expectation of the production function, for each education level that
the worker can choose. Under the model of Section \ref{sec:StructuralModel}
the identity of the firm that worker $i$ matches with, $\mathcal{M}(i)$,
depends on $K$, $H$, $\beta$, and, $\theta$. Therefore, for each
$i\in N_{h}$ and $h_{j}\in\mathcal{H}$, we wish to estimate 
\begin{align*}
\tilde{f}_{ij} & \equiv\mathbf{E}[f(H_{i},K_{\mathcal{M}(i)})|H_{i}=h_{j},X_{i}=x_{i}],
\end{align*}
where the expectation is taken with respect to the distribution of
$K$, $H_{-i}$ and $\eta$. 
Under Assumption \ref{Assumption-4} (a), we can express the expectation on the preceding
line as 
\begin{equation}
\tilde{f_{ij}}=f_{j}'\pi_{j}^{(i)},\label{eq:f_ij_tilde}
\end{equation}
where $f_{j}=(f_{j1},...,f_{jm})'$ is an $M\times1$ vector with
the $m$-th element of $f_{j}$ given as $f_{jm}=f(h_{j},k_{m})$ and
$\pi_{j}^{(i)}=(\pi_{1j}^{(i)},...,\pi_{Mj}^{(i)})'$ is an $M\times1$
vector with the $m$-th element of $\pi_{j}^{(i)}$ given as 
\begin{eqnarray}
\pi_{mj}^{(i)} & = & \sum_{h_{-i}\in\mathcal{H}_{-i}}P(\mathcal{M}(i)=m|H_{i}=h_{j},H_{-i}=h_{-i},X_{i}=x_{i})P(h_{-i}|x_{i}).\label{eq:section_3_pi_m_j_i}
\end{eqnarray}
This is the probability that worker $i$ matches to a firm of capital
level $k_{m}$ when he has chosen education level $h_{j}$.\footnote{Note that although these terms depend on $\theta$ and $\beta$, we
will occasionally omit these from our notation for convenience.} Given that there are $M$ education levels, $J$ choices, and $n_{h}$
workers, the dimensionality of the problem appears daunting. However,
under our assumptions the problem is simplified considerably, and
we can show that for each $j$ and $m$, $\pi_{mj}^{(i)}=\pi_{mj}$, and hence, $\tilde{f}_{ij}=\tilde{f}_{j}$.\footnote{The argument for why this is the case is given in the proof of Proposition
\ref{Proposition_1}.}

Although it is unclear how to represent $\pi_{mj}$'s analytically when
the worker faces a choice between a large number of education levels,
the problem becomes tractable when there are only two (i.e., $J=2$).
Proposition \ref{Proposition_1} shows that under our informational
assumptions, firms (and workers) cannot distinguish between workers
with the same education level during the matching process. As a consequence, we find that a worker is only concerned with the number of
other workers who picked one of the two education levels (and not
which particular workers chose what). Independence and identical distributions
assumptions imply that the probability that $n_{j}$ workers picked
education level $h_{j}$ can be represented using the binomial probability
mass function. However, the number of workers choosing education level
$h_{j}$ is unknown to workers, so they must take expectations. Thus,
instead of having to sum over $n_{h}-1$ indices associated with actions
of each of the other workers to compute the worker's expectation,
we need only sum over one: the number of workers choosing a particular
education level.

 We will also allow $\theta$ to enter
$\pi_{mj}$'s through the distribution of the number of firms that prefer
high (or low) education. The following assumption is a natural way
to specify this distribution. We use the notation $M_{j}^{+}(\theta)$ to denote the set of firm types that prefer education level $h_{j}$.\footnote{That is, $M_{j}^{+}(\theta)  =\{m\in\{1,...,M\}:\rho(k_{m},h_{j};\theta)\geq\rho(k_{m},h_{j'};\theta),j\neq j'\}$. See also the discussion before Proposition    \ref{Proposition_2_heterogeneous_firms}.}
\noindent \begin{assumption}\label{4b}\textit{In the model with $J=2$,
the probability that exactly $n^{(j)}$ firms prefer workers with
education level $h_{j}$ follows the binomial distribution with probability
$\sum_{m\in M_{j}^{+}(\theta)}q_{m}$.}
\end{assumption}

The explicit representation of the matching probabilities are given
in Propositions \ref{Proposition_2_heterogeneous_firms}, \ref{Proposition_3_homogeneous_firms}, and Lemma \ref{Lemma_4_order_statistic_lemma}. These results can be used to construct estimates of the $\pi_{mj}$'s
- and hence the $\tilde{f}_{j}$'s - for fixed values of $\theta$
and $\beta$.  Using a given functional form for the production function,
we denote an estimate of the expected production function when the
worker chooses education level $h_{j}$ as
\begin{align*}
\hat{f}_{j}(\theta,\beta) & =f_{j}'\hat{\pi}_{j}(\theta,\beta),
\end{align*}
where our notation emphasizes the dependence of the objects upon the
parameter values. To construct $\hat{\pi}_{mj}$'s we must estimate the terms of equation
\ref{eq:pimjtheta}. $\hat{P}(n_{j})$ is constructed as $B(n_{j};n_{h}-1,\hat{p}_{j})$
where the latter denotes the binomial probability mass function with $\hat{p}_{j}=P(H_{i}=h_{j})$.\footnote{In so doing, we pursue a two-step approach for estimating the choice probabilities, such as \citet*{bajari2013game}. See for example \citet*{Kasahara/Shimotsu:12:Ecta} for an alternative approach.}
Similarly, $\hat{P}(n^{(j)};\theta)$ is constructed as $B(n_{j};n_{h}-1,\hat{q}_{j}(\theta))$,
where $\hat{q}_{j}(\theta)=\sum_{m\in\hat{M}_{j}^{+}(\theta)}\hat{q}_{m}$, with
$\hat{q}_{m}=\hat{P}(K_{j}=m)$,  
\[
\hat{M}_{j}^{+}(\hat{\theta})=\left\{ m\in\{1,...,M\}:\hat{\rho}(k_{m},h_{j};\theta)\geq\hat{\rho}(k_{m},h_{j'};\theta),j\neq j'\right\} ,
\]
and $\hat{\rho}(k_{m},h_{j};\theta)$ is as in equation  (\ref{eq:firm_preferences}),
except we use $\hat{g}_{j}=\frac{1}{n}\sum_{i=1}^{n}g(h_{j},X_{i})$
in place of $\tilde{g}$. 

Lastly, the $P_{h_{j},n_{j},n^{(j)}}(m)$'s, from equation \ref{eq:pimjtheta}
- that is, the probability that a worker matches to a firm of type
$m$ when they choose education level $h_{j}$, $n_{j}$ other workers
choose $h_{j}$, and $n^{(j)}$ firms prefer $h_{j}$ - can be simulated
for fixed values of $\theta$ and $\beta$. Propositions \ref{Proposition_2_heterogeneous_firms}
and \ref{Proposition_3_homogeneous_firms} show how these
can be represented using probabilities involving order statistics.
Under Assumption \ref{Assumption-4} (b), we can construct $\hat{P}_{h_{j},n_{j},n^{(j)}}(m)$'s
by averaging functions of simulated draws of beta-distributed random variables.\footnote{In particular,
see Corollaries \ref{corollary:1} and \ref{corollary:2}, which follow the order statistic
result in Lemma \ref{Lemma_4_order_statistic_lemma}.} 

Once  we have estimated $\hat{f}_{j}(\theta,\beta)$ for each education level, we may use the specification of the wage from equation \ref{eq:expected_wage} to write the expected wage as 
\begin{align}
\hat{\omega}_{ji}(\theta,\beta)= & \tau\hat{f}_{j}(\theta,\beta)+(1-\tau)g(H_{i},X_{i};\theta). \label{eq:expected_age}
\end{align}
When there are two choices ($J=2$), the worker chooses high education
($h_{j}=1$) if and only if
\[
U_{1i}^{*}-U_{i0}^{*}>0.
\]
 Under the assumption that $\varepsilon_{i}$'s follow the extreme
value distribution (Assumption \ref{Assumption-4}), the probability that worker $i$
chooses high education can be written as
\[
\hat{p}_{i}(\theta,\beta)=\frac{\exp(\hat{\omega}_{1i}(\theta,\beta)-\hat{\omega}_{0i}(\theta,\beta))}{1+\exp(\hat{\omega}_{1i}(\theta,\beta)-\hat{\omega}_{0i}(\theta,\beta))}.
\]
Since the covariates $\{X_{i}\}_{i=1}^{n}$
are iid we can write the joint likelihood as the product of the marginal
likelihoods. We can then define the estimator
of $\theta$ (for a fixed value of $\beta$) as the minimizer of the standard logit likelihood function:\footnote{When $\beta$ is fixed, maximizing the likelihood by computing the
$\hat{f}_{j}(\theta,\beta)$'s for each candidate value of $\theta$
can be slow. The following strategy can be used to estimate $\theta$
for fixed $\beta$ more quickly provided that the support of $K$
is not too large. First, note that $\theta$ enters $\hat{f}_{j}(\theta,\beta)$
only through the set of firm types that prefer education level $h_{j}$,
$\hat{M}_{j}^{+}(\theta)$. Given our assumptions on the production
function and firm preferences, $\hat{M}_{j}^{+}(\theta)$ must take
one of $M+1$ possible values. Therefore, for fixed $\beta$ , we
can avoid simulating $\hat{f}_{j}(\theta,\beta)$ for each candidate
value of $\theta$ by pre-allocating the $\hat{q}_{j}(\theta)$'s
and $P_{h_{j},n_{j},n^{(j)}}(m)$'s for each of the $M+1$ cases for
$\hat{M}_{j}^{+}(\theta)$. It then suffices to evaluate $\hat{M}_{j}^{+}(\theta)$,
select the appropriate dimension of the array of terms, then assemble
the terms according to equation \ref{eq:pimjtheta}.}

\[
\ln L_{n}(\theta,\beta)=-\sum_{i=1}^{n}\left[h_{i}\ln\hat{p}_{i}(\theta,\beta)+(1-h_{i})\ln(1-\hat{p}_{i}(\theta,\beta))\right].
\]

\subsection{Matching Probabilities }

In this section, we consider the role of frictions, or the magnitude
of $\beta_{0}$ relative to the variance of $\eta$, in shaping matching
patterns between workers and firms. Note that these frictions play
no role in determing firm preferences, or which firm types prefer
high education.\footnote{We discuss the role of firm preferences on matching patterns at the
end of Section \ref{sec:StructuralModel}.} Nevertheless, because the frictions do affect sorting patterns, they
are of considerable importance to workers when they decide how much
to invest in education. 

In the following example, we will suppose that that the set of firms
that prefer education level $h_{j}$, $M_{j}^{+}$, contains at least
two types of firms, $m$ and $\tilde{m}$ with $k_{m}\neq k_{\tilde{m}}$. Suppose we fix $N_{j}$, the number of workers who chose education
level $h_{j}$, at some $n_{j}$ and we fix $N^{(j)}$, the number of
firms who prefer highly-educated workers at some $n^{(j)}$ such that
$n_{j}+1<n^{(j)}$. In this situation, there are strictly more firms
who prefer type $h_{j}$ workers than there are workers of this type.
Let $\kappa=n^{(j)}-n_{j}+1$, and denote $p_{m\kappa}\equiv P(v_{m}>v_{(\kappa)})$
for each $m$ in $M_{j}^{+}$. Proposition \ref{Proposition_2_heterogeneous_firms} says that the difference
in the probability of matching to a type $\tilde{m}$ versus a type
$m$ firm at these values of $n_{j}$ and $n^{(j)}$ in such a situation
is given by
\begin{eqnarray}
 &  & \left(p_{\tilde{m}\kappa}-p_{m\kappa}\right)q_{\tilde{m}}^{+}/c_{\kappa}+p_{m\kappa}\left(q_{\tilde{m}}^{+}-q_{m}^{+}\right)/c_{\kappa},\label{eq:MP}
\end{eqnarray}
with
\[
c_{\kappa}\equiv\sum_{m\in M_{j}^{+}}p_{m\kappa}q_{m}^{+},
\]
where $q_{m}^{+}=q_{m}/\sum_{m\in M_{j}^{+}}q_{m}$. Under Assumption \ref{Assumption-4}, the case of $\beta_{0}=0$ gives us that $p_{mk}=p_{\tilde{m}k}$,
implying that the first term in the parentheses of equation \ref{eq:MP}
is zero. This means that when matching frictions are highest (i.e.,
when $\beta_{0}=0$), the difference in the probability of matching to
one type of firm that prefers $h_{j}$ over another is captured by the
relative prevalence of those types of firms in the economy. 

In the case that $\beta_{0}>0$, Assumption \ref{Assumption-4} implies that $p_{\tilde{m}\kappa}-p_{m\kappa}$
becomes larger as $k_{\tilde{m}}-k_{m}$ becomes larger. This means
that higher capital firms have a better chance of matching with the
high education workers when $\beta_{0}>0.$ On the other hand, in the
case that $n_{j}+1>n^{(j)}$ (i.e., $h_{j}$ is demanded by fewer firms
than there are in the economy), then the above probabilities once again depend solely on the relative
prevalence of the each type of firm.

\section{A Small Monte Carlo Simulation Study\label{sec:MonteCarlo}}
\noindent \begin{center}
\begin{table}[!htbp]
\caption{The Empirical Coverage Probability of Parametric Bootstrap Confidence
Intervals for $a'\theta_{0}$ at 95\% Nominal Level When $\beta_{0}$
is Known. }

\begin{centering}
\begin{tabular}{cccccc}
 &  & \multicolumn{4}{c}{}\tabularnewline
\hline 
 &  & \multicolumn{4}{c}{Specification}\tabularnewline
\cline{1-1} \cline{3-6} 
$\beta_{0}$  &  & $g_{1}$, $f_{1}$  & $g_{1}$, $f_{2}$ & $g_{2}$, $f_{1}$ & $g_{2}$, $f_{2}$\tabularnewline
\hline 
$0$  & $n=500$  & 0.9540  & 0.9600  & 0.9640 & 0.9720\tabularnewline
 & $n=1000$  & 0.9480 & 0.9480 & 0.9480 & 0.9480\tabularnewline
\hline 
$1$  & $n=500$  & 0.9360  & 0.9560  & 0.9740 & 0.9680\tabularnewline
 & $n=1000$  & 0.9400 & 0.9580 & 0.9480 & 0.9420\tabularnewline
\hline 
$2$  & $n=500$  & 0.9320  & 0.9700  & 0.9740 & 0.9660\tabularnewline
 & $n=1000$  & 0.9360 & 0.9660 & 0.9460 & 0.9460\tabularnewline
\hline 
$3$  & $n=500$  & 0.9360  & 0.9660  & 0.9740 & 0.9620\tabularnewline
 & $n=1000$  & 0.9460 & 0.9740 & 0.9400 & 0.9520\tabularnewline
\hline 
 & \multicolumn{1}{c}{} &  &  &  & \tabularnewline
\end{tabular}
\par\end{centering}
\begin{singlespace}
\parbox[c]{6in}{%
Notes: The table reports the empirical coverage probability of the
parametric bootstrap confidence interval for $a'\theta_{0}$, where
$a=(1,1)'$ and $\theta_{0}=(1,1)'$. The simulated rejection probability
at the true parameter is close to the nominal size of $\alpha=0.05$.
The simulation number is $R=500$. In each of the iterations, the
bootstrap number is $B=200$. %
}
\end{singlespace}
\end{table}
\par\end{center}

\noindent \begin{center}
\begin{table}[!htbp]
\caption{Average Length of Parametric Bootstrap Confidence Intervals for $a'\theta_{0}$
at 95\% Nominal Level When $\beta_{0}$ is Known.}

\begin{centering}
\begin{tabular}{cccccc}
 &  & \multicolumn{4}{c}{}\tabularnewline
\hline 
 &  & \multicolumn{4}{c}{Specification}\tabularnewline
\cline{1-1} \cline{3-6} 
$\beta_{0}$  &  & $g_{1}$, $f_{1}$  & $g_{1}$, $f_{2}$ & $g_{2}$, $f_{1}$ & $g_{2}$, $f_{2}$\tabularnewline
\hline 
$0$  & $n=500$  & 0.9186  & 1.0630  & 1.7494 & 1.1045\tabularnewline
 & $n=1000$  & 0.6509 & 0.7819 & 0.6826 & 0.7889\tabularnewline
\hline 
$1$  & $n=500$  & 0.8702  & 1.1872  & 1.3817 & 1.1953\tabularnewline
 & $n=1000$  & 0.6239 & 0.8676 & 0.6730 & 0.8581\tabularnewline
\hline 
$2$  & $n=500$  & 0.8733  & 1.2912  & 0.9566 & 1.2896\tabularnewline
 & $n=1000$  & 0.6320 & 0.9491 & 0.6588 & 0.9350\tabularnewline
\hline 
$3$  & $n=500$  & 0.8851  & 1.4045  & 0.9542 & 1.3866\tabularnewline
 & $n=1000$  & 0.6460 & 1.0349 & 0.6706 & 1.0110\tabularnewline
\hline 
 & \multicolumn{1}{c}{} &  &  &  & \tabularnewline
\end{tabular}
\par\end{centering}
\begin{singlespace}
\parbox[c]{6in}{%
Notes: This table reports the average length of the asymptotic confidence
interval for $a'\theta_{0}$, where $a=(1,1)'$ and $\theta_{0}=(1,1)'$.
The lengths of the of the confidence intervals decrease with $n$.
The simulation number is $R=500$ and the bootstrap number is $B=200$. %
} 
\end{singlespace}
\end{table}
\par\end{center}

In this section, we investigate the finite sample size and power properties
of the estimator of preferences, $\hat{\theta}_{n}(\beta)$, under
a variety of parameters and functional form assumptions. The results
in this section are for the case where the matching technology, $\beta$,
is known to the econometrician. 

We consider the following general structure for the worker's expected
utility function: 
\[
\tilde{U}_{i}=\left(f_{i}+g_{i}\right)/2+d(H_{i})\varepsilon_{i},
\]
where $\theta=(\theta_{1},\theta_{2})'\in\mathbf{R}^{2}$, $X_{i}\in\mathbf{R}^{2}$, and
$d(H_{i})$ is a $2\times1$ vector with one in the $H_{i}$-th row
where $H_{i}$ takes values of one or two. $X_{i}=(X_{1i},X_{2i})'$
are drawn independently across $i$ and one another from $U[0,1].$
Firm capital takes the value of $1/2$ and $1$ with equal probability.
We also suppose that $\varepsilon_{i}\in\mathbf{R}^{2}$ follows the
extreme value distribution so that the best response probability function
has the logit structure. $\eta$ is drawn independently from the
standard normal distribution. We use 100 draws of beta random variables
to compute the matching probabilities. We also interpolate the supports
of $N^{(j)}$ and $N_{(j)}$ so that they have $n/50$ support points
rather than $n-1$ support points. In our experiments, we set the
true value of preferences to be $\theta_{0}=(1,1)'$.

We consider two functional forms for the production function which
we call $f_{i1}$ and $f_{i2}$: 
\begin{align*}
f_{i1} & =\theta_{1}H_{i}\cdot\pi_{i}(\theta,\beta)'k,\text{ and}\\
f_{i2} & =\theta_{1}(H_{i}+\pi_{i}(\theta,\beta)'k).
\end{align*}
$f_{i1}$ implies production complementarities between the worker
and firm variables whereas any complementarities in $f_{i2}$ are
forced through the worker's expectation of firm capital $\pi_{i}'k$.
We also consider the performance of the inference under two versions
of the outside option, $g_{i1}$ and $g_{i2}$: 
\begin{align*}
g_{i1} & =\exp(H_{i}\cdot X_{i}\theta_{2}),\text{ and}\\
g_{i2} & =H_{i}\exp(X_{i}\theta_{2}).
\end{align*}
Note that these choices of the outside option function ensure non-negativity.
For each simulation sample, the $H_{i}$'s are generated as follows.
First, we solve for fixed point in the best response operator to obtain
$\mathbf{P}^{*}$.\footnote{In experiments with different starting values, iterating the best
response operator yielded the same fixed point each time.} Then we compute the best response at the simulated covariates

\[
\Psi_{i}(H_{i}|X_{i},\mathbf{P}_{-i}^{*})=\frac{\exp(\tilde{U}_{i}^{*}(H_{i},X_{i}))}{\sum_{j=1}^{2}\exp(\tilde{U}_{i}^{*}(H_{j},X_{i}))}.
\]
Letting $\Psi_{i}^{*}(X_{i})\equiv\Psi_{i}(H_{i}|X_{i},\mathbf{P}_{-i}^{*})$
we then generate the actions as, 
\[
H_{i}=1\{\Psi_{i1}^{*}>\omega_{i}\}
\]
where $\omega_{i}$'s are drawn iid from the uniform distribution
on $[0,1]$.

\section{Conclusion}\label{sec:conclusion_c1}

This paper presents an empirical strategy for studying wages and education
in a labour market where the decisions of workers matter in the matching
process. We demonstrate the
feasibility of our approach in the case that the worker faces a choice
between two education levels. 

One limitation of the current approach is its reliance
on cross-sectional variation alone for inference. In effect, useful
information concerning unemployment and job-to-job transitions by
workers is unused in our framework. 

This paper has also demonstrated how the decision to invest in education
- and wage inequality - is sensitive to the presence of a particular
source of matching frictions in the economy. Although firm capital
is exogenous in this paper, the role of information frictions on capital
accumulation in an extended framework could be a fruitful way to study
not only wage inequality, but also economic growth. 

\subsection*{Acknowledgements}
I thank Kyungchul Song, David Green, Vadim Marmer, and Florian Hoffmann for their advice and kind support throughout the course of this project. I have also benefited from
comments from Joris Pinske, Aureo de Paula, Hiro Kasahara, Paul Schrimpf, Michael Peters, Anna Rubinchik, Jonathan Graves, Anujit Chakraborty, and the participants of presentations at the University of British Columbia, the University of
Haifa, and the University of California (Davis).

\section{Appendix A: Equilibrium Existence and Uniqueness}\label{sec:equilibrium}

In this section, we characterize the equilibrium of the incomplete
information game of Section \ref{sec:StructuralModel}.
First, we introduce a representation of the worker's expected utility
function that proves useful for establishing the existence of the
Bayesian Nash equilibrium of the game as a fixed point of a best probability
response operator. We begin by defining relevant terms. A profile
of \textit{strategy functions} (or decision rules) is 
\[
\sigma=\{\sigma_{i}(x_{i},\varepsilon_{i}):i\in N_{h}\},
\]
where the functions $\sigma_{i}:\mathcal{X}\times\mathbf{R}^{J-1}\rightarrow\mathcal{H}$.
The conditional probability that a worker with covariates $x_{i}$
chooses action $h_{i}$ can be written

\[
P_{i}(h_{i}|x_{i},\sigma_{i})\equiv\int1\{\sigma_{i}(x_{i},\varepsilon_{i})=h_{i}\}dF(\varepsilon_{i}).
\]
Since $X_{i}$'s are private information in this model, each agent
$i$ must take expectations with respect to the distribution of $X_{-i}$.
The following result shows that under the independence assumptions
embodied by Assumption \ref{AssIndep}, the
agent's expected utility has a very convenient form - it is only affected
by the behaviour of the other agents through the choice probabilities. 
\begin{lemma}
\label{lemma_1_expected_utility_representation}In the model of Section
(\ref{sec:StructuralModel})
and Assumptions \ref{AssIndep} and \ref{AssSep}, we can represent the first term in the expected
utility of agent $i$ from equation \ref{eq:EU} as
\begin{align*}
\tilde{U}_{i}(h_{i},x_{i},\sigma) & =\sum_{h_{-i}\in\mathcal{H}_{-i}}\tilde{u}_{i}(h_{i},h_{-i},x_{i})\prod_{j\neq i}P_{j}(h_{j}|\sigma_{j}).
\end{align*}
\end{lemma}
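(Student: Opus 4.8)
The plan is to start from the definition of $\tilde U_i(h_i,x_i,\sigma)$ in equation (\ref{eq:FTEU}), namely $\tilde U_i(h_i,x_i,\sigma)=\sum_{h_{-i}\in\mathcal H_{-i}}\tilde u_i(h_i,h_{-i},x_i)\,P_{-i}(h_{-i}\mid\sigma)$, and to show that the joint conditional probability $P_{-i}(h_{-i}\mid\sigma)$ of the opponents' action profile factors as $\prod_{j\neq i}P_j(h_j\mid\sigma_j)$. Everything reduces to this factorization, since once it is in hand one simply substitutes it back into the sum.

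\textbf{Step 1 (set up the joint opponent-probability as an integral).} I would write $P_{-i}(h_{-i}\mid\sigma)$ explicitly as the probability, under the distribution of $(X_{-i},\varepsilon_{-i})$, of the event $\{\sigma_j(X_j,\varepsilon_j)=h_j \text{ for all } j\neq i\}$, i.e.
\[
P_{-i}(h_{-i}\mid\sigma)=\int \prod_{j\neq i} 1\{\sigma_j(x_j,\varepsilon_j)=h_j\}\, dF\big((x_j,\varepsilon_j)_{j\neq i}\big).
\]
This is just unpacking the notation for the strategy functions introduced just before the lemma.

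\textbf{Step 2 (invoke independence and Fubini).} By Assumption \ref{AssIndep}(a), the pairs $(X_j,\varepsilon_j)$ are mutually independent across $j$ (the $X_j$'s are independent across $j$, the $\varepsilon_j$'s are independent across $j$, and $X$ is independent of $\varepsilon$), so the joint distribution $F\big((x_j,\varepsilon_j)_{j\neq i}\big)$ is a product measure $\bigotimes_{j\neq i}F(x_j,\varepsilon_j)$. Each integrand factor $1\{\sigma_j(x_j,\varepsilon_j)=h_j\}$ depends only on the $j$-th coordinate, so by Fubini's theorem the integral of the product is the product of the integrals:
\[
P_{-i}(h_{-i}\mid\sigma)=\prod_{j\neq i}\int 1\{\sigma_j(x_j,\varepsilon_j)=h_j\}\, dF(x_j,\varepsilon_j)=\prod_{j\neq i}P_j(h_j\mid\sigma_j),
\]
where the last equality is the definition of $P_j(h_j\mid\sigma_j)$ (after integrating out $x_j$ as well, which is legitimate since the object in the lemma statement is $P_j(h_j\mid\sigma_j)$ with $x_j$ integrated out — here Assumption \ref{AssSep}, separability of the cost in private information, is what guarantees that the relevant conditional payoffs $\tilde u_i$ do not themselves depend on the opponents' private shocks, so the averaging over $(x_{-i},\varepsilon_{-i})$ only touches the indicator functions).

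\textbf{Step 3 (substitute back).} Plugging this factorization into (\ref{eq:FTEU}) yields $\tilde U_i(h_i,x_i,\sigma)=\sum_{h_{-i}\in\mathcal H_{-i}}\tilde u_i(h_i,h_{-i},x_i)\prod_{j\neq i}P_j(h_j\mid\sigma_j)$, which is the claim. The only genuinely substantive point — the part I would be most careful about — is Step 2: making sure the independence in Assumption \ref{AssIndep}(a) is exactly strong enough to license the product-measure / Fubini argument, and that the role of Assumption \ref{AssSep} is precisely to ensure $\tilde u_i$ (equivalently $\tilde\omega_i$ minus the deterministic cost $c_0$) carries no residual dependence on $\varepsilon_{-i}$ that would obstruct pulling the opponents' integrals apart from the payoff term. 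Everything else is bookkeeping.
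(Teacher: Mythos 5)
Your proposal is correct and follows essentially the same route as the paper: both arguments reduce the lemma to the factorization of the opponents' joint choice probability into a product of marginals via the independence of $(X_j,\varepsilon_j)$ across $j$. The paper carries this out in two explicit conditioning steps (first factoring $P(x_{-i})$ across $j$, then factoring $P(h_{-i}\mid x_{-i},\sigma)$ using that each $\sigma_j$ depends only on $(x_j,\varepsilon_j)$, and finally regrouping the sums), whereas you compress the same content into a single product-measure/Fubini step, but the substance is identical.
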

\begin{proof}
First, we write equation \ref{eq:EU} as 
\begin{eqnarray}
\tilde{U}_{i}(h_{i},x_{i},\sigma) & = & \sum_{x_{-i}\in\mathcal{X}_{-i}}\sum_{h_{-i}\in\mathcal{H}_{-i}}\tilde{u}_{i}(h_{i},h_{-i},x_{i})P_{-i}(h_{-i}|x_{-i},\sigma)P(x_{-i}),\label{eq:lemma_1_eq_1}
\end{eqnarray}
where $x_{-i}=(x_{j})_{j\in N_{h}\backslash\{i\}}$ and we use the
shorthand $P(x_{-i})\equiv P(X_{-i}=x_{-i})$. Without loss of generality,
let $i=1$. Then we write $\tilde{U}_{1}(h_{1},x_{1},\sigma)$ as
\begin{equation}
\sum_{h_{-1}\in\mathcal{H}_{-1}}\sum_{x_{2}\in\mathcal{X}}...\sum_{x_{n}\in\mathcal{X}}\tilde{u}_{1}(h_{1},h_{-1},x_{1})P_{-1}(h_{-1}|x_{2},...,x_{n},\sigma)\prod_{j=2}^{n_{h}}P_{j}(x_{j}),\label{eq:lemma_1_eq_2}
\end{equation}

\noindent where we used the independence of $X_{i}$'s from Assumption \ref{AssIndep}. Next, since Assumption \ref{AssIndep}
says that $X_{i}$'s and $\varepsilon_{i}$'s are independent, we
know that the actions of each of the agents are independent and depend
only on their personal value of $X_{i}$ and $\varepsilon_{i}$. Therefore, 
\noindent 
\begin{eqnarray}
P(h_{-1}|x_{2},...,x_{n},\sigma) & = & \prod_{j=2}^{n_{h}}P_{j}(h_{j}|x_{2},...x_{n},\sigma_{j})=\prod_{j=2}^{n_{h}}P_{j}(h_{j}|x_{j},\sigma_{j}).\label{eq:lemma_1_eq_3}
\end{eqnarray}

\noindent Plugging (\ref{eq:lemma_1_eq_3}) back into (\ref{eq:lemma_1_eq_2})
yields that $\tilde{U}_{1}(h_{1},x_{1},\sigma)$ is equal to
\noindent 
\begin{eqnarray}
    \sum_{h_{-i}\in\mathcal{H}_{-i}}\tilde{u}_{i}(h_{i},h_{-i},x_{i})\sum_{x_{2}\in\mathcal{X}}...\sum_{x_{n}\in\mathcal{X}}\prod_{j=2}^{n_{h}}P_{j}(h_{j}|x_{j},\sigma_{j})\prod_{j=2}^{n_{h}}P_{j}(x_{j}).\label{eq:lemma_1_eq_4}
\end{eqnarray}

\noindent Grouping the sums in (\ref{eq:lemma_1_eq_4}) and restoring
the generic $i$ index gives 

\noindent 
\[
\sum_{h_{-i}\in\mathcal{H}_{-i}}\tilde{u}_{i}(h_{i},h_{-i},x_{i})\prod_{j\neq i}\sum_{x_{j}\in\mathcal{X}}P_{j}(h_{j}|X_{j}=x_{j},\sigma_{j})P_{j}(x_{j})
\]
and hence we have the desired result.
\end{proof}
We will show the existence of the equilibrium for our model. The solution
concept for the game described in Section \ref{sec:StructuralModel}
is Bayesian Nash Equilibrium (BNE), which we now define.
\begin{definition}
\textit{A Bayesian Nash Equilibrium (BNE) of the game described in
Section (\ref{sec:StructuralModel})
is a profile of decision rules $\sigma^{*}$ such that for any player
$i$ and for any $(x_{i},\varepsilon_{i})$:
\begin{eqnarray}
\sigma_{i}^{*}(x_{i},\varepsilon_{i}) & = & \text{argmax }_{h_{i}\in\mathcal{H}}\left\{ U_{i}(h_{i},x_{i},\varepsilon_{i},\sigma^{*})\right\} .\label{eq:BNE}
\end{eqnarray}
}
\end{definition}
The notation and arguments in this section follow \citet*{Aguirregabiria/Mira:19:WP},
but we include them here for completeness. Under Assumption \ref{AssSep},
we write the expected utility of $i$ as 
\begin{eqnarray*}
U_{i}(h_{i},x_{i},\sigma,\varepsilon_{i}) & = & \tilde{U}_{i}(h_{i},x_{i},\sigma)+\varepsilon_{i}'d(h_{i}).
\end{eqnarray*}
By Lemma \ref{lemma_1_expected_utility_representation} we can express
the first term on the right hand side of the the preceding equation
as
\begin{eqnarray*}
\tilde{U}_{i}(h_{i},x_{i},\sigma) & = & \sum_{h_{-i}\in\mathcal{H}_{-i}}\tilde{u}_{i}(h_{i},h_{-i},x_{i})\prod_{j\in N_{h}\backslash\{i\}}P_{j}(h_{j}|\sigma_{j}).
\end{eqnarray*}
Note that $\tilde{U}_{i}(h_{i},x_{i},\sigma)$ only depends on the
choices of other agents through the choice probabilities of the other
players that are induced by $\sigma$. We write the choice probabilities
of the people other than $i$ as
\[
\mathbf{P}_{-i}\equiv\{P_{j}(h_{j}):(j,h_{j})\in N\backslash\{i\}\times\mathcal{H}\backslash\{1\}\}.
\]
For any $\mathbf{P}_{-i}$, we can define a \textit{best response
probability function} as:
\begin{eqnarray*}
\tilde{\Psi}_{i}(h_{i}|x_{i},\mathbf{P}_{-i}) & \equiv & \int1\{\text{argmax }_{h_{i}\in\mathcal{H}}\tilde{U}_{i}(h_{i},x_{i},\sigma)+\varepsilon_{i}'d(h_{i})=h_{i}\}dF(\varepsilon_{i}).
\end{eqnarray*}
$\tilde{\Psi}_{i}$ tells us the probability that a particular action
is optimal for $i$ with covariates $x_{i}$ when others choose according
to probabilities $\mathbf{P}_{-i}$.\footnote{Note that when $\varepsilon_{i}$'s have the extreme value distribution
(as in Assumption \ref{Assumption-4}) then we have

\[
\tilde{\Psi}_{i}(h_{i}|x_{i},\mathbf{P}_{-i})=\frac{\exp(\tilde{U}_{i}(h_{i},x_{i}))}{\sum_{j=1}^{J}\exp(\tilde{U}_{i}(h_{j},x_{i}))}.
\]
} Let 
\begin{eqnarray*}
\Psi_{i}(h_{i}|\mathbf{P}_{-i}) & = & \sum_{x_{i}\in\mathcal{X}}\tilde{\Psi}_{i}(h_{i}|x_{i},\mathbf{P}_{-i})P(x_{i}).
\end{eqnarray*}
 An equivalent to Definition 1 \ref{eq:BNE} is that the equilibrium
probabilities, $\mathbf{P}^{*}\equiv\mathbf{P}(\sigma^{*})$, satisfy
the fixed point constraint, \textbf{$\mathbf{P}^{*}=\Psi(\mathbf{P}^{*})$},\textbf{
}where $\Psi$ is the best response probability mapping: 
\begin{equation}
\Psi(\mathbf{P})=\{\Psi_{i}(h_{i}|\mathbf{P}_{-i}):(i,h_{i})\in N\times\mathcal{H}\backslash\{1\}\}.\label{eq:Fixed_point}
\end{equation}
\begin{lemma}\label{existence}
Under Assumption \ref{AssIndep} and Assumption \ref{AssSep}  the game described in Section
(\ref{sec:StructuralModel}) has a Bayesian Nash Equilibrium.
\end{lemma}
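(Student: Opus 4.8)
The plan is to realize a BNE as a fixed point of the best response probability mapping $\Psi$ from equation (\ref{eq:Fixed_point}) and invoke Brouwer's fixed point theorem. First I would set up the domain: a profile of choice probabilities $\mathbf{P}=\{P_i(h_i):(i,h_i)\in N_h\times\mathcal{H}\setminus\{1\}\}$ lives in $\Delta\equiv\prod_{i\in N_h}\Delta_{J-1}$, where $\Delta_{J-1}$ is the $(J-1)$-dimensional probability simplex; this set is non-empty, compact, and convex as a subset of Euclidean space. The equivalence recorded just after (\ref{eq:Fixed_point}) — that $\sigma^*$ is a BNE in the sense of (\ref{eq:BNE}) if and only if $\mathbf{P}^*=\mathbf{P}(\sigma^*)$ satisfies $\mathbf{P}^*=\Psi(\mathbf{P}^*)$ — reduces the problem to showing the continuous self-map $\Psi:\Delta\to\Delta$ has a fixed point.

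Next I would verify the hypotheses of Brouwer. That $\Psi$ maps $\Delta$ into $\Delta$ is immediate, since each $\Psi_i(\cdot\mid\mathbf{P}_{-i})$ is by construction a probability distribution over $\mathcal{H}$. The substantive step is continuity. By Lemma \ref{lemma_1_expected_utility_representation}, for each $i$ and $h_i$,
\[
\tilde U_i(h_i,x_i,\sigma)=\sum_{h_{-i}\in\mathcal{H}_{-i}}\tilde u_i(h_i,h_{-i},x_i)\prod_{j\neq i}P_j(h_j),
\]
which is a multilinear — hence continuous — function of $\mathbf{P}_{-i}$, provided the finitely many coefficients $\tilde u_i(h_i,h_{-i},x_i)$ are finite; finiteness holds because they are expectations over $K$ and $\eta$ of the wage net of a deterministic cost, with integrability inherited from the primitives. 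Composing with the map from utility indices to choice probabilities, I would argue that
\[
\tilde\Psi_i(h_i\mid x_i,\mathbf{P}_{-i})=\int\mathbf{1}\Big\{h_i=\arg\max_{h\in\mathcal{H}}\big(\tilde U_i(h,x_i,\sigma)+\varepsilon_i'd(h)\big)\Big\}\,dF(\varepsilon_i)
\]
is continuous in $\mathbf{P}_{-i}$: since $\varepsilon_i$ is continuously distributed (Assumption \ref{AssIndep}(b)), a tie in the argmax has probability zero, so under a small perturbation of the utility vector the integrand changes only on an $F$-null set, and dominated convergence gives continuity. Averaging over the finitely many values of $x_i$ with the fixed weights $P(x_i)$ preserves continuity, so $\Psi$ is continuous on $\Delta$.

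With $\Delta$ compact and convex and $\Psi$ a continuous self-map, Brouwer's theorem delivers $\mathbf{P}^*=\Psi(\mathbf{P}^*)$; taking $\sigma_i^*(x_i,\varepsilon_i)\in\arg\max_{h_i\in\mathcal{H}}\{\tilde U_i(h_i,x_i,\sigma^*)+\varepsilon_i'd(h_i)\}$ with any fixed tie-breaking rule (which matters only on an $F$-null set) yields a profile of decision rules satisfying (\ref{eq:BNE}), i.e., a BNE. The main obstacle I anticipate is the continuity argument — specifically, making precise that the discontinuity set of the argmax indicator (the tie set) is $F$-null and that this null property is stable enough to pass to the limit; Assumption \ref{AssIndep}(b) is exactly what is needed, and in the logit special case of Assumption \ref{Assumption-4} continuity is transparent from the closed form given in the footnote after the definition of $\tilde\Psi_i$. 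A secondary point to handle carefully is confirming integrability of the coefficients $\tilde u_i(h_i,h_{-i},x_i)$ so that the multilinear representation is a genuine continuous real-valued function of $\mathbf{P}_{-i}$.
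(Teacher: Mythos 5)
Your proposal is correct and follows essentially the same route as the paper: both realize the BNE as a fixed point of the best response probability mapping $\Psi$ on the compact convex set of choice probability profiles and invoke Brouwer's fixed point theorem, with continuity of $\Psi$ coming from the continuity of the $\varepsilon_i$ distribution (Assumption \ref{AssIndep}(b)). Your write-up merely fills in the continuity and tie-breaking details that the paper's proof states in one line.
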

\begin{proof}
Let $\mathcal{P}\equiv[0,1]^{n\times(J-1)}$. Note that $\mathcal{P}$ is a compact and convex set. Since $\Psi(\cdot)$ maps from $\mathcal{P}$ to itself and is continuously differentiable by the continuity of $\varepsilon_{i}$'s (Assumption \ref{AssIndep})), $\Psi(\cdot)$ has a fixed point by Brouwer's fixed point theorem. 
\end{proof}
We can demonstrate that the Bayesian equilibrium is unique under mild conditions on the derivatives of the best response probability mapping. Define $J_{n}(\boldsymbol{P})\equiv\frac{\partial\Psi(\boldsymbol{P})}{\partial\boldsymbol{P}'}-I_{n}$,
where $I_{n}$ is the identity matrix and let $\det(A_{n})$ denote
the determinant of an $n$-by-$n$ matrix, $A_{n}$. A result of \cite*{Kellogg:76:PAMS}, as stated in \cite*{Konovalov/Sandor:10:ET}, says that the equilibrium is unique if $\Psi$ has
no fixed points on the boundary of $\mathcal{P}$ and provided that
$\det(J_{n}(\boldsymbol{P}))$ is non-zero for each $\boldsymbol{P}\in\mathcal{P}$. 
Note that under the conditions of Lemma \ref{existence},  $J_{n}(\boldsymbol{P})$, is a matrix with $-1$'s on the diagonal and $\varphi(\boldsymbol{P})=\partial\Psi_{i}(\boldsymbol{P})/\partial p_{j}$ for all $i\neq j$ (the off-diagonals). Therefore, $J_{n}(\boldsymbol{P})$ is a circulant matrix implying the following explicit formula:
\[
\det(J_{n}(\boldsymbol{P}))=\left(\varphi(\boldsymbol{P})\cdot(n-1)-1\right)\left(-(1+\varphi(\boldsymbol{P}))\right)^{n-1}.
\]

This determinant is guaranteed to be non-zero provided that $\varphi(\boldsymbol{P})\neq1/(n-1)$ for every $\boldsymbol{P}\in\mathcal{P}$ and $n\geq2$. In our setup, the requirement that $\Psi$ have no fixed points on the boudnary of $\mathcal{P}$ holds under weak conditions on the distribution of $\varepsilon_{i}'$s. 

\section{Appendix B: First-Stage Estimation of $\theta_0$}\label{sec:AppendixB}
\subsection{Characterization of Matching Probabilities}
The remaining results of this section allow us to represent the matching
probabilities from equation \ref{eq:section_3_pi_m_j_i}, hence workers'
expectations, in a convenient way. These representations can then
be used to estimate $\theta(\beta)$ using maximum likelihood. 
\begin{proposition}
\label{Proposition_1}Suppose that Assumptions \ref{AssIndep}, \ref{AssSep},
\ref{AssInfo}, \ref{Assumption-4} hold, and  that $J=2$. Then the probability that any worker $i$  matches to
a firm from capital class $m$ conditional on choosing education level $h_{j}$ is 
\begin{eqnarray*}
\pi_{mj} & =\sum_{n_{j}=0}^{n_{h}-1} & P(\mathcal{M}(i)=m|H_{i}=h_{j},N_{j}=n_{j})B(n_{j};n_{h}-1,p_{j}),
\end{eqnarray*}
 for each $m=1,...,M$ and each $h_{j}\in\mathcal{H}$, where $N_{j}$ is the number of workers other than $i$ who picked
education level $h_{j}$, $B(n_{j};n_{h}-1,p_{j})$ is the binomial p.m.f.
and $p_{j}=P(H_{i}=h_{j}).$
\end{proposition}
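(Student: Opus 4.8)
The plan is to start from the definition of $\pi_{mj}^{(i)}$ in equation (\ref{eq:section_3_pi_m_j_i}) and to reorganize the sum over $h_{-i}$ by collecting configurations according to a single sufficient statistic: the number $n_j$ of workers other than $i$ who choose education level $h_j$. I would split the argument into two pieces: (i) the conditional matching probability $P(\mathcal{M}(i)=m\mid H_i=h_j, H_{-i}=h_{-i}, X_i=x_i)$ depends on the conditioning event only through $h_j$ and $n_j$; and (ii) once aggregated over all configurations $h_{-i}$ with a given value of $n_j$, the mixing weights $P(h_{-i}\mid x_i)$ collapse to the binomial p.m.f.\ $B(n_j;n_h-1,p_j)$.

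For (i) I would first note that by Assumption \ref{AssIndep} the profile $H$ (a function of $X$ and $\varepsilon$) is independent of the firm-side variables $(K,\eta)$ and of the exogenous tie-breaking randomization, while Assumption \ref{AssInfo}(a) ensures firms never use worker covariates in forming preference rankings. Since, under Condition SD, $\mathcal{M}(i)$ is a measurable function of the education profile, of $(K,\eta)$ (through the $v$-ordering and the induced firm preferences), and of the tie-break draws, conditioning additionally on $X_i=x_i$ leaves the conditional law of $\mathcal{M}(i)$ unchanged. The reason only $n_j$ matters is the dichotomy discussed around equation (\ref{eq:firm_preferences}): with $J=2$, each firm either prefers all $h_j$-workers to all $h_{j'}$-workers or the reverse, and is indifferent within an education class, breaking ties uniformly at random; hence equally-educated workers are exchangeable, and because $J=2$ the full count vector of the other workers is pinned down by $n_j$. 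This gives $P(\mathcal{M}(i)=m\mid H_i=h_j, H_{-i}=h_{-i}, X_i=x_i)=P(\mathcal{M}(i)=m\mid H_i=h_j, N_j=n_j)$.

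For (ii) I would use the independence of $X_i$ from $X_{-i}$ and $\varepsilon_{-i}$ (Assumption \ref{AssIndep}) together with the representation in Lemma \ref{lemma_1_expected_utility_representation} to write $P(h_{-i}\mid x_i)=\prod_{\ell\neq i}P_\ell(h_\ell)$; in the symmetric equilibrium implied by the iid structure, $P_\ell(H_\ell=h_j)=p_j$ for every $\ell$, so any configuration $h_{-i}$ with exactly $n_j$ of the $n_h-1$ other workers at level $h_j$ has probability $p_j^{\,n_j}(1-p_j)^{\,n_h-1-n_j}$, and there are $\binom{n_h-1}{n_j}$ of them; summing yields $B(n_j;n_h-1,p_j)$. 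Combining (i) and (ii) and substituting into (\ref{eq:section_3_pi_m_j_i}) gives the stated formula, and since the right-hand side no longer depends on $i$ or $x_i$ it also establishes the remark $\pi_{mj}^{(i)}=\pi_{mj}$ made in the text.

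I expect the only genuine obstacle to be making the exchangeability step in (i) fully rigorous — that is, exhibiting $\mathcal{M}(i)$ explicitly as a measurable map of (education profile, $K$, $\eta$, tie-break draws) and checking that a permutation of the labels of equally-educated workers induces a measure-preserving transformation on the tie-break space, so that the matching probability is genuinely permutation-invariant. Step (ii) is then routine bookkeeping with the binomial theorem, and the independence-from-covariates claim follows directly from Assumptions \ref{AssIndep} and \ref{AssInfo}(a).
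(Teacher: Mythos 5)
Your proposal is correct and follows essentially the same route as the paper's proof: strip $X_i$ from the conditioning via Assumption \ref{AssInfo}(a), factor $P(h_{-i}\mid x_i)$ into the product of marginal choice probabilities exactly as in Lemma \ref{lemma_1_expected_utility_representation}, and then use the $J=2$ symmetry to collapse the sum over $h_{-i}$ into a sum over $n_j$ weighted by the binomial p.m.f. Your step (i) is in fact slightly more explicit than the paper about why the matching probability depends on $h_{-i}$ only through the count $n_j$ (exchangeability of equally-educated workers under uniform tie-breaking), a point the paper asserts rather than argues.
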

\begin{proof}
Part (a) of Assumption \ref{AssInfo} that says firms do not consider the workers'
covariates when ranking them in the matching process. This means that
for each $m=1,...,M$ we have that 
\begin{align*}
P(\mathcal{M}(i)=m|H_{i}=h_{j},H_{-i}=h_{-i},X_{i}=x_{i}) & =P(\mathcal{M}(i)=m|H_{i}=h_{j},H_{-i}=h_{-i}).
\end{align*}
Combining this with equation \ref{eq:section_3_pi_m_j_i}, we can
write
\begin{eqnarray*}
\pi_{mj}^{(i)} & = & \sum_{h_{-i}\in\mathcal{H}_{-i}}P(\mathcal{M}(i)=m|H_{i}=h_{j},H_{-i}=h_{-i})P(H_{-i}=h_{-i}|X_{i}=x_{i}).
\end{eqnarray*}
Next, it is straightfoward to see that\footnote{This can be shown using the same arguments as those in Lemma \ref{lemma_1_expected_utility_representation}.
The private information and independence of $X_{i}$'s (Assumption
\ref{AssIndep}) implies that the left hand size of \ref{eq:prodmarginal} equals
\begin{eqnarray}
\sum_{x_{-i}\in\mathcal{X}_{-i}}P(h_{-i}|x_{-i},x_{i})P(x_{-i}|x_{i}) & = & \sum_{x_{-i}\in\mathcal{X}_{-i}}P(h_{-i}|x_{-i})P(x_{-i}).\label{eq:Proposition_1_eq_1}
\end{eqnarray}

Suppose without loss of generality that $i=1$. It is convenient to
rewrite the above as follows (using independence):
\[
\sum_{x_{2}\in\mathcal{X}}...\sum_{x_{n}\in\mathcal{X}}P(h_{-1}|x_{2},...,x_{n})\prod_{j=2}^{n_{h}}P_{j}(x_{j}).
\]
Next, since $X_{i}$'s and $\varepsilon_{i}$'s are independent across
$i$ and each $i$'s strategy function is only a function of $X_{i}$
and $\varepsilon_{i}$ we have
\begin{eqnarray*}
P(h_{-1}|x_{-1})=\prod_{j\neq1}^{n_{h}}P_{j}(h_{j}|x_{-1}) & = & \prod_{j\neq1}^{n_{h}}P_{j}(h_{j}|x_{j}).
\end{eqnarray*}
Combining these two results we write \ref{eq:Proposition_1_eq_1}
as
\begin{align*}
\sum_{x_{2}\in\mathcal{X}}P_{j}(h_{2}|x_{2})P_{j}(x_{2})...\sum_{x_{n}\in\mathcal{X}}P_{n}(h_{n}|x_{n})P_{n}(x_{n}) & =\prod_{j\neq1}^{n_{h}}P_{j}(h_{j}).
\end{align*}
}
\begin{equation}
P(H_{-i}=h_{-i}|X_{i}=x_{i})=\prod_{j\neq i}^{n_{h}}P_{j}(h_{j}).\label{eq:prodmarginal}
\end{equation}
 Since $\varepsilon_{i}$ are identically distributed by Assumption \ref{AssIndep}, for
each $j$ and $m$, have $\pi_{mj}^{(i)}=\pi_{mj}$. 

When there are only two education levels, any $h_{-i}\in\mathcal{H}_{-i}$
can be represented as a total number of workers other than $i$ who
picked education level $h_{j}$, $n_{j}$. From worker $i$'s point of
view, $n_{j}$ is a particular realization of the random variable
$N_{j}$ that takes values in the set $\{0,...,n_{h}-1\}$. Since
there are $n_{h}-1$ agents other than $i$ in the economy, the sum
over $h_{-i}\in\mathcal{H}_{-i}$ amounts to a sum over the support
of $N_{j}$. Now consider any $n_{j}$ in the support of $N_{j}$.
The assumption that $\varepsilon_{i}$'s are iid implies that the
probability that exactly $n_{j}$ out of $n_{h}-1$ workers pick $h_{j}$
can be represented as
\[
\frac{(n_{h}-1)!}{n_{j}!(n_{h}-1-n_{j})!}p_{j}^{n_{j}}(1-p_{j})^{n_{h}-1-n_{j}},
\]
which is the binomial probability mass function, $B(n_{j};n_{h}-1,p_{j})$. 
\end{proof}

When $J=2$ we can partition the types of firms, $m=1,...,M$ into
two sets: those who prefer $h_{j}\in\mathcal{H}$ and those who prefer $h_{j'}$ with
$j'\neq j$. It is convenient to introduce the following notation:
\begin{align}
M_{j}^{+}(\theta) & =\{m\in\{1,...,M\}:\rho(k_{m},h_{j};\theta)\geq\rho(k_{m},h_{j'};\theta),j\neq j'\},\text{ and}\\
M_{j}^{-}(\theta) & =\{1,...,M\}\backslash M_{j}^{+},\label{eq:M_sets}
\end{align}

\noindent recalling that firm preferences are given in \ref{eq:firm_preferences}.
The firm classes that prefer $h_{j}$ are pinned down by the functional
form for firm preferences, $\rho$,  the preference parameter,
$\theta$, and the distribution of $X_{i}$. Furthermore let us denote
\begin{eqnarray}
\pi_{mj} & \equiv & \sum_{n^{(j)}=0}^{n_{f}}\sum_{n_{j}=0}^{n_{h}-1}P_{h_{j},n_{j},n^{(j)}}(m)P(n_{j})P(n^{(j)};\theta),\label{eq:pimjtheta}
\end{eqnarray}
where 
\begin{align}
P_{h_{j},n_{j},n^{(j)}}(m) & =P(\mathcal{M}(i)=m|h_{i}=h_{j},N_{j}=n_{j},N^{(j)}=n^{(j)}).\label{eq:Section_8_pt_2_pjnjnj}
\end{align}

Note that this object depends on both $\beta$ and $\theta$ through the matching function.
For each firm type $m=1,...M$ let $F_{m}\equiv N(\beta k_{m},\sigma^{2})$,
and define the following for each education choice $h_{j}$: 
\[
G_{j+}\equiv\sum_{m\in M_{j}^{+}}q_{m}F_{m}\text{ and }G_{j-}\equiv\sum_{m\in M_{j}^{-}}q_{m}F_{m}.
\]
Furthermore, define the posterior firm types as follows:
\[
q_{m}^{+}\equiv q_{m}/\sum_{m\in M_{j}^{+}}q_{m}\text{ and }q_{m}^{-}\equiv q_{m}/\sum_{m\in M_{j}^{-}}q_{m}.
\]
We also define $v_{(b_{1},b_{2};F)}$ as the $b_{1}$-order statistic
of $b_{2}$ random variables independently distributed according to
cdf $F$. Propositions \ref{Proposition_2_heterogeneous_firms} and
\ref{Proposition_3_homogeneous_firms} are characterizations of $P_{h_{j},n_{j},n^{(j)}}(m)$'s
of the model in the case that $J=2$ and $n_{h}=n_{f}=n$. 

When considering these results, it is important to recall one core
feature of the matching model as we outline it in Section \ref{sec:section_two_structural_model}:
that there is no unemployment. Therefore, when reading the arguments,
the reader should take for granted the fact that the probability that
each worker matches to some firm occurs with probability one. 
\begin{proposition}
\noindent (Heterogeneous firm preferences).\label{Proposition_2_heterogeneous_firms}
Denote $\bar{n}_{j}\equiv n_{j}+1$ and suppose that $n_{h}=n_{f}=n$.
Then under the assumptions of Proposition \ref{Proposition_1} we have the following
for any $n_{j}$ such that $1\leq n_{j}\leq n$ and $n^{(j)}$ such
that $0<n^{(j)}<n$:

i) For each $m\in M_{j}^{+}$,
\[
P_{h_{j},n_{j},n^{(j)}}(m)=\begin{cases}
q_{m}^{+}n^{(j)}/\bar{n}_{j} & \text{ if }\bar{n}_{j}\geq n^{(j)}\\
\frac{P(v_{m}>\hat{v})q_{m}^{+}}{\sum_{m\in M_{j}^{+}}P(v_{m}>\hat{v})q_{m}^{+}} & \text{ if }\bar{n}_{j}<n^{(j)}
\end{cases},
\]
where $\hat{v}\equiv v_{(a,b;F)}$ with $a=n^{(j)}-\bar{n}_{j}$,
$b=n^{(j)}$, and $F=G_{j+}$. 

ii) For each $m\in M_{j}^{-}$, 
\[
P_{h_{j},n_{j},n^{(j)}}(m)=\begin{cases}
\frac{P(v_{m}<\hat{v})q_{m}^{-}\left((\bar{n}_{j}-n^{(j)})/\bar{n}_{j}\right)}{\sum_{m\in M_{j}^{-}}P(v_{m}<\hat{v})q_{m}^{-}} & \text{ if }\bar{n}_{j}>n^{(j)}\\
0 & \text{ if }\bar{n}_{j}\leq n^{(j)}
\end{cases},
\]
where $\hat{v}\equiv v_{(a,b;F)}$ with $a=\bar{n}_{j}-n^{(j)}+1$,
$b=n-n^{(j)}$, and $F=G_{j-}$. 

\end{proposition}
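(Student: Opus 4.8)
The plan is to reduce the statement to a purely combinatorial description of how the serial dictatorship allocates the $\bar{n}_j$ workers who chose $h_j$, and then to read off the capital of the matched firm by conditioning on firm types. By Proposition \ref{Proposition_1} (and Assumption \ref{AssInfo}(a), so that firms ignore covariates), $P_{h_j,n_j,n^{(j)}}(m)$ depends neither on the identity of $i$ nor on $x_i$; moreover, under Assumption \ref{AssIndep} the $\bar{n}_j$ workers with education $h_j$ are exchangeable throughout the matching, so I may fix attention on one ``tagged'' such worker and compute the distribution of the capital of the firm it ends up with. Conditioning on $N_j=n_j$ and $N^{(j)}=n^{(j)}$, I would first determine, for a given realisation of the $v$-values, exactly which firms come to hold an $h_j$-worker, and then average over $v$ and over firm capital types.

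First I would establish the structural dichotomy. Since $n_h=n_f=n$ and there is no unemployment, there are $\bar{n}_j$ workers and $n^{(j)}$ firms on the ``$h_j$ side'' and $n-\bar{n}_j$ workers and $n-n^{(j)}$ firms on the ``$h_{j'}$ side''. Walking through Condition SD in decreasing order of $v$: an $h_j$-preferring firm takes an $h_j$-worker whenever one remains and only spills over to an $h_{j'}$-worker once the $h_j$-workers are exhausted, and symmetrically for $h_{j'}$-preferring firms; counting then forces exactly one of two regimes. If $\bar{n}_j\ge n^{(j)}$ (equivalently $n-n^{(j)}\ge n-\bar{n}_j$), every $h_j$-preferring firm obtains an $h_j$-worker, every $h_{j'}$-worker is taken by an $h_{j'}$-preferring firm, and the remaining $\bar{n}_j-n^{(j)}$ $h_j$-workers are absorbed by the $\bar{n}_j-n^{(j)}$ \emph{lowest}-$v$ firms among the $h_{j'}$-preferring firms (the ones that pick after the $h_{j'}$-workers are gone). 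If $\bar{n}_j<n^{(j)}$, the $\bar{n}_j$ \emph{highest}-$v$ firms among the $h_j$-preferring firms take all $h_j$-workers and no $h_{j'}$-preferring firm ever touches an $h_j$-worker, which already yields the $P_{h_j,n_j,n^{(j)}}(m)=0$ branch of (ii), including the boundary case $\bar{n}_j=n^{(j)}$, where the spillover set is empty.

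Next I would compute the probabilities. Random tie-breaking over the exchangeable $h_j$-workers makes the tagged worker equally likely to be matched to each firm in the (random) set $S$ of firms that come to hold an $h_j$-worker, so $P_{h_j,n_j,n^{(j)}}(m)=\frac{1}{\bar{n}_j}\mathbf{E}\,\#\{\phi\in S:\text{capital}(\phi)=k_m\}$, and I split this expectation according to whether $k_m\in M_j^+$ or $k_m\in M_j^-$. Conditional on $N^{(j)}=n^{(j)}$, the capital types of the $n^{(j)}$ $h_j$-preferring firms are i.i.d.\ with law $(q_m^+)$ and their $v$-values are, conditionally on capital $k_m$, independent $F_m$; likewise the $n-n^{(j)}$ $h_{j'}$-preferring firms carry $(q_m^-)$ and $F_m$ (pooled as $G_{j-}$). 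In the regime $\bar{n}_j\ge n^{(j)}$, $S$ contains all $n^{(j)}$ $h_j$-firms, so for $m\in M_j^+$ the expectation is $n^{(j)}q_m^+$, giving $q_m^+ n^{(j)}/\bar{n}_j$; for $m\in M_j^-$, $S$ contains only the $\bar{n}_j-n^{(j)}$ lowest-$v$ $h_{j'}$-firms, a $k_m$-firm lies in this set precisely when its $v$ falls below the relevant lower order statistic $\hat v$ of the $h_{j'}$-firms' $v$-values, and combining with the i.i.d.\ type structure and the overall spillover weight $(\bar{n}_j-n^{(j)})/\bar{n}_j$ yields the stated normalised expression with $a=\bar{n}_j-n^{(j)}+1$, $b=n-n^{(j)}$, $F=G_{j-}$. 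In the regime $\bar{n}_j<n^{(j)}$, $S$ consists of the $\bar{n}_j$ highest-$v$ $h_j$-firms; a $k_m$-firm with $m\in M_j^+$ lies in $S$ iff its $v$ exceeds the relevant upper order statistic $\hat v$ with $a=n^{(j)}-\bar{n}_j$, $b=n^{(j)}$, $F=G_{j+}$, and normalising over $M_j^+$ (the tagged worker now matches an $h_j$-firm with probability one) produces the second branch of (i).

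The main obstacle is the order-statistic bookkeeping in this last step. One must verify that conditioning on $N^{(j)}=n^{(j)}$ genuinely leaves the preference groups' capitals i.i.d., so that the generic-firm calculation is legitimate, and then translate ``firm $\phi$ of capital $k_m$ is among the top-$\bar{n}_j$ (resp.\ bottom-$(\bar{n}_j-n^{(j)})$) by $v$ within its preference group'' into the precise events $\{v_m>\hat v\}$ and $\{v_m<\hat v\}$ with exactly the right pair of indices $(a,b)$ and the correct reference distribution ($G_{j+}$ versus $G_{j-}$, keeping track of whether these are the pooled sub-stochastic mixtures or their normalised versions), while ensuring the normalising constants are consistent with $\sum_m P_{h_j,n_j,n^{(j)}}(m)=1$. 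Finally, the degenerate cases excluded by $0<n^{(j)}<n$ and $1\le n_j$ should be checked, to confirm that the two regimes above exhaust all admissible pairs $(n_j,n^{(j)})$.
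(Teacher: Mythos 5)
Your proposal is correct and follows essentially the same route as the paper's proof: the same dichotomy between the regimes $\bar{n}_j\geq n^{(j)}$ and $\bar{n}_j<n^{(j)}$, the same uniform-random assignment among indistinguishable $h_j$-workers giving the factors $n^{(j)}/\bar{n}_j$ and $(\bar{n}_j-n^{(j)})/\bar{n}_j$, and the same order-statistic/threshold-crossing characterization of which firms in each preference group obtain an $h_j$-worker. The only difference is cosmetic — you compute $\frac{1}{\bar{n}_j}\mathbf{E}\,\#\{\phi\in S:\mathrm{capital}(\phi)=k_m\}$ where the paper decomposes as $P_j(\mathcal{M}_i=m\mid M_{ij}^{\pm})P_j(M_{ij}^{\pm})$ and applies Bayes' rule, which yields the same normalized expressions.
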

\begin{proof}
We begin by introducing some notation. We denote the event that a
worker $i$ who chose education level $h_{j}$ matches to any firm of type
$m\in M_{j}^{+}$ or $m\in M_{j}^{-}$ as $M_{ij}^{+}$ and $M_{ij}^{-}$
respectively.\footnote{That is, $M_{ij}^{+}\equiv\{\mathcal{M}_{i}\in M_{j}^{+}\}$ and similarly
for $M_{ij}^{-}\equiv\{\mathcal{M}_{i}\in M_{j}^{+}\}$.} 

First, we consider the probability that a worker who chose $h_{j}$ matches
to any firm in the class $m\in M_{j}^{+}.$ Consider the case that
$\bar{n}_{j}\geq n^{(j)}$. In this case, there are at least as many
workers who chose $h_{j}$ as firms who prefer $h_{j}$. Given that Condition
IR implies that no worker or firm will never unilaterally dissolve
a match to become unmatched, the case of $\bar{n}_{j}\geq n^{(j)}$
implies that every firm in class $m$ who wants a worker with $h_{j}$
will hire one in the matching process. For each class of firm $m\in M_{j}^{+}$,
the probability that a worker who chose $h_{j}$ matches to a firm in
the set of firms that prefers $h_{j}$ and to the particular class $m\in M_{j}^{+}$
is given as follows when $\bar{n}_{j}\geq n^{(j)}$:
\begin{eqnarray*}
P_{j}(\mathcal{M}_{i}=m,M_{ij}^{+}) & = & P_{j}(\mathcal{M}_{i}=m|M_{ij}^{+})P_{j}(M_{ij}^{+}).\\
 & = & q_{m}^{+}n^{(j)}/\bar{n}_{j},
\end{eqnarray*}
where the $j$-subscript on the probabilities denote a probability
conditional on the event $H_{i}=h_{j}$. $P_{j}(M_{ij}^{+})$ is equal to $n^{(j)}/\bar{n}_{j}$ because workers
with the same $h_{j}$ are indistinguishable to the firms that prefer
them, so firms choose among these workers at random. The probability
of matching to a firm of type $m\in M_{j}^{+}$ given that the worker
has already matched to some firm in $M_{j}^{+}$ is equal to the relative
proportion of type $m$ firms in this category, $q_{m}^{+}$. 

Next, we consider the case that $\bar{n}_{j}<n^{(j)}.$ Since there
are strictly more firms that prefer $h_{j}$ than workers who chose $h_{j}$,
the probability that a worker who chose $h_{j}$ matches to a firm that
prefers workers with $h_{j}$ occurs with probability one; that is $P_{j}(M_{ij}^{+})=1$.\footnote{This follows from Condition IR and the following two facts: i) $h_{j}$
workers are scarce relative to the firms that prefer them ii) firms
that prefer $h_{j'}$ will never choose a $h_{j}$ worker in the matching
process since the condition $n_{h}=n_{f}=n$ and $J=2$ implies that
$h_{j'}$ workers are always available (i.e., when $n_{h}=n_{f}=n$, $n^{(j)}>\bar{n}_{j}$
implies that $n_{j'}>n^{(j')}$, since $n_{j'}=n-\bar{n}_{j}$ and
$n^{(j')}=n-n^{(j)}$).}

Although $P_{j}(M_{ij}^{+})=1$, only the firms with the $\bar{n}_{j}$
largest $v$-indices will be able to match with a worker who chose
$h_{j}$. Thus, a firm in $M_{j}^{+}$ matches to a worker with $h_{j}$ if
and only if its $v$ statistic exceeded the $\kappa=n^{(j)}-\bar{n}_{j}$
order statistic among all $n^{(j)}$ firms in $M_{j}^{+}$. Thus,
by Assumptions \ref{AssInfo} and \ref{Assumption-4}, the probability that a worker who chose $h_{j}$
matches with a firm from a particular class $m\in M_{j}^{+}$ conditional on matching
to some firm in $M_{j}^{+}$ is 
\[
P(v(K)=v(k_{m})|v(K)>\hat{v},m\in M_{j}^{+}),
\]
which by Bayes' rule equals
\begin{align}
 & \frac{P(v(K)>\hat{v}|v(K)=v(k_{m}),m\in M_{j}^{+})P(v(K)=v(k_{m})|m\in M_{j}^{+})}{\sum_{m\in M_{j}^{+}}P(v(K)>\hat{v}|v(K)=v(k_{m}),m\in M_{j}^{+})P(v(K)=v(k_{m})|m\in M_{j}^{+})},\label{eq:threshhold_crossing}
\end{align}
where $\hat{v}\equiv v_{(\kappa,n^{(j)};G_{j+})}$. Equation \ref{eq:threshhold_crossing}
represents the relative proportion of type $m$ firms represented
among threshhold crossers among all firms that prefer $h_{j}$. We next
consider the probability of matching to each firm with $m\in M_{j}^{-}$.
We consider first the case that $\bar{n}_{j}>n^{(j)}$. The relevant
probability is
\begin{eqnarray*}
P_{j}(\mathcal{M}_{i}=m,M_{ij}^{-}) & = & P_{j}(\mathcal{M}_{i}=m|M_{ij}^{-})P_{j}(M_{ij}^{-})\\
 & = & P_{j}(\mathcal{M}_{i}=m|M_{ij}^{-})(1-n^{(j)}/\bar{n}_{j}).
\end{eqnarray*}
As stated above, the of case $\bar{n}_{j}>n^{(j)}$ combined with
our assumption that $n_{h}=n_{f}=n$ implies that $n^{(j')}>n_{j'}$,
since $n_{j'}=n-\bar{n}_{j}$ and $n^{(j')}=n-n^{(j)}$. Therefore
by similar logic to before, firms who prefer $h_{j'}$ match to workers
with $h_{j}$ if their $v$-index is lower than the $n^{(j')}-n_{j'}+1=\bar{n}_{j}-n^{(j)}+1$
order statistic among those firms in $M_{j}^{-}$. Letting $\kappa\equiv\bar{n}_{j}-n^{(j)}+1$, the probability of a worker who chose $h_{j}$ matching to a type $m\in M_{j}^{-}$
firm conditional on matching to some firm in $M_{ij}^{-}$ is given
as the proportion of type $m$ firms whose $v$ index falls below
this threshhold: 

\[
P_{j}(\mathcal{M}_{i}=m|M_{ij}^{-})=\frac{P(v_{m}<\hat{v})q_{m}^{-}}{\sum_{m\in M_{j}^{-}}P(v_{m}<\hat{v})q{}_{m}^{-}},
\]
where $\hat{v}\equiv v_{(\kappa,n^{(j')};G_{j-})}$. Lastly, in the
case that $\bar{n}_{j}\leq n^{(j)}$, $P(M_{ij}^{-})=0$. This completes
the proof. 
\end{proof}
Next we define $G\equiv\sum_{m=1}^{M}F_{m}q_{m}$. Proposition \ref{Proposition_3_homogeneous_firms}
characterizes the matching probabilities in the case that all firms
types prefer one level of education; that is, in the case that firm
preferences are homogeneous over worker education types. The arguments
are abridged, since they are very similar to those used in the proof
of Proposition \ref{Proposition_2_heterogeneous_firms}. 
\begin{proposition}
\noindent (Homogeneous firm preferences).\label{Proposition_3_homogeneous_firms}
Suppose that $n_{h}=n_{f}=n$. Then under the assumptions of Proposition \ref{Proposition_1} we have the following for the cases that $n^{(j)}=n$ and $n^{(j)}=0.$ 
\begin{enumerate}
\item if $n^{(j)}=n$, then $M_{j}^{-}=\emptyset$ and for each $m\in M_{j}^{+}=M$
we have 

\[
P_{h_{j},n_{j},n^{(j)}}(m)=\begin{cases}
q_{m} & \text{ if }\bar{n}_{j}=n\\
\frac{P(v_{m}>\hat{v})q_{m}}{\sum_{m\in M}P(v_{m}>\hat{v})q_{m}} & \text{ if }\bar{n}_{j}<n
\end{cases},
\]
where $\hat{v}\equiv v(a,b;G)$, with $a=n-\bar{n}_{j}$ and
$b=n$. 
\item If $n^{(j)}=0$, then $M_{j}^{+}=\emptyset$ and for each $m\in M_{j}^{-}=M$
we have 

\begin{align*}
P_{h_{j},n_{j},n^{(j)}}(m) & =\begin{cases}
q_{m} & \text{ if }\bar{n}_{j}=n\\
\frac{P(v_{m}<\hat{v})q_{m}}{\sum_{m\in M}P(v_{m}<\hat{v})q_{m}} & \text{if }\bar{n}_{j}<n
\end{cases},
\end{align*}
where $\hat{v}\equiv v(a,b;G)$, with $a=\bar{n}_{j}+1$ and
$b=n$. 
\end{enumerate}
\end{proposition}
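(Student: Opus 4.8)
The plan is to reduce the statement to the two degenerate configurations of firm preferences that its hypotheses single out, and then recycle, with simplifications, the argument in the proof of Proposition \ref{Proposition_2_heterogeneous_firms}. Homogeneity of firm preferences over education means every capital class ranks the same education level weakly above the other, so either $M_j^+(\theta)=\{1,\dots,M\}$ and $M_j^-(\theta)=\emptyset$ — in which case all $n$ firms prefer $h_j$ and, by Assumption \ref{4b} together with $\sum_{m\in M_j^+}q_m=1$, we have $N^{(j)}=n$ almost surely — or $M_j^+(\theta)=\emptyset$ and $M_j^-(\theta)=\{1,\dots,M\}$, in which case $N^{(j)}=0$ almost surely. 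This is why only the cases $n^{(j)}=n$ and $n^{(j)}=0$ need to be addressed, and in the first $q_m^+=q_m$, $G_{j+}=G$, while in the second $q_m^-=q_m$, $G_{j-}=G$.

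For $n^{(j)}=n$, I would note that we are in the situation of part (i) of Proposition \ref{Proposition_2_heterogeneous_firms} with $M_j^+=\{1,\dots,M\}$. If $\bar{n}_j=n$ then all $n$ workers chose $h_j$, so demand for $h_j$-workers equals supply; since workers with a common education level are indistinguishable to the firms that want them, the firm matched to worker $i$ is uniformly distributed over the $n$ firms, and because the $K_j$'s are i.i.d.\ this gives $P_{h_j,n_j,n^{(j)}}(m)=q_m$. If $\bar{n}_j<n$ the $h_j$-workers are scarce, so only the $\bar{n}_j$ firms with the largest $v$-indices obtain one; worker $i$ matches to one of these (with probability one, since the model has no unemployment), and a type-$m$ firm lies in this group iff its index exceeds the $(n-\bar{n}_j)$-th smallest of the $n$ indices, which are i.i.d.\ $G$. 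Bayes' rule, applied exactly as in the proof of Proposition \ref{Proposition_2_heterogeneous_firms}, then yields $P(v_m>\hat v)q_m/\sum_{m\in M}P(v_m>\hat v)q_m$ with $\hat v=v(n-\bar{n}_j,n;G)$.

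For $n^{(j)}=0$, every firm prefers $h_{j'}$, and the number of workers who chose $h_{j'}$ is $n-\bar{n}_j\le n-1<n$, so the $h_{j'}$-workers are always scarce relative to the firms wanting them. If $\bar{n}_j=n$ there are no $h_{j'}$-workers at all, all firms must take $h_j$-workers, demand equals supply, and the exchangeability argument above again gives $q_m$. If $\bar{n}_j<n$, then by Condition SD the $n-\bar{n}_j$ firms with the largest $v$-indices exhaust the $h_{j'}$-workers, and the remaining $\bar{n}_j$ firms — those with the smallest indices — are forced to hire the $\bar{n}_j$ workers who chose $h_j$, since no agent is left unmatched; hence worker $i$ is matched to one of the bottom-$\bar{n}_j$ firms. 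A type-$m$ firm lies in the bottom $\bar{n}_j$ iff its index falls below the $(\bar{n}_j+1)$-th smallest of the $n$ draws from $G$, and Bayes' rule again gives $P(v_m<\hat v)q_m/\sum_{m\in M}P(v_m<\hat v)q_m$ with $\hat v=v(\bar{n}_j+1,n;G)$.

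The substantive work is entirely in two bookkeeping points. First, one must verify from the identities $n_{j'}=n-\bar{n}_j$ and $n^{(j')}=n-n^{(j)}$ which side of the market is short in each case, so that it is unambiguous whether worker $i$ ends up with a top-$\bar{n}_j$ or a bottom-$\bar{n}_j$ firm; this is the step where the two boundary cases genuinely differ. Second, one must pin down the correct order statistic as the matching threshold — in particular whether it is counted from the bottom or the top and whether the threshold firm is itself included — which is what produces $a=n-\bar{n}_j$ in the first case versus $a=\bar{n}_j+1$ in the second. Neither point is deep, but they are precisely where an off-by-one error would creep in; the invocation of the no-unemployment feature to justify conditioning on the probability-one event that worker $i$ is matched is the remaining place requiring a word of care.
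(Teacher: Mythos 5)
Your proposal is correct and follows essentially the same route as the paper's own (much terser) proof: reduce each boundary case to the counting and order-statistic logic of Proposition \ref{Proposition_2_heterogeneous_firms}, with $q_m$ in the demand-equals-supply subcases and the Bayes'-rule/threshold-crossing formula otherwise. The only addition is your explicit observation that homogeneity plus Assumption \ref{4b} makes $N^{(j)}$ degenerate at $n$ or $0$, which the paper leaves implicit but which does not change the argument.
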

\begin{proof}
When $n^{(j)}=n$ and $\bar{n}_{j}=n$ the probability of matching
to firm $m$ is simply equal to the marginal probability of that firm
type in the economy, $q_{m}$. When $n^{(j)}=n$ and $\bar{n}_{j}<n$,
using logic identical to that employed in the proof of Proposition
\ref{Proposition_2_heterogeneous_firms}, we conclude that the probability of matching to a firm from class
$m$ is equal to the proportion of type $m$ firms above the $n-\bar{n}_{j}$
order statistic of the $v$'s.

When $n^{(j)}=0$, we must have $\bar{n}_{j}>n^{(j)}=0$ (since at
least one person is assumed to choose $h_{j}$). Since the top $n_{j'}=n-\bar{n}_{j}$
ranked firms in terms of $v$ receive a worker with their preferred
education, $h_{j'}$, the probability of matching to a firm in class $m$
is equal to the proportion of type $m$ below the $\bar{n}_{j}+1$
order statistic of the $v$'s. 
\end{proof}
The following result takes for granted a well-known fact that uniform
order statistics follow the Beta distribution.\footnote{For example, see Chapter 2 \citet*{Ahsanullah/Nevzorov/Shakil:13:Springer}. }
\begin{lemma}
\label{Lemma_4_order_statistic_lemma}Let: i) $\{X_{i}\}_{i=1}^{n}$
be iid random variables from continuous distribution function $G$;
ii) $Z$ be normally distributed with mean $\mu$ and variance $\sigma^{2}$;
iii) $X_{(i)}$ be the $i$-th order statistic of $\{X_{i}\}_{i=1}^{n}$;
iv) $U_{(i)}$ be the $i$-th order statistic of iid uniform random
variables $\{U_{i}\}_{i=1}^{n}$. Then,
\[
P(Z\geq X_{(i)})=1-\mathbf{E}\Phi((G^{-1}(U_{(i)})-\mu)/\sigma),
\]
where $\Phi(\cdot)$ is the standard normal cdf, and $\mathbf{E}(\cdot)$
is taken over the distribution of $U_{(i)}$, which follows the Beta
distribution with parameters $i$ and $n+1-i$. 
\end{lemma}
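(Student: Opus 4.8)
The plan is to establish the identity in three moves: condition on the order statistic $X_{(i)}$ to eliminate the normal random variable $Z$, apply the probability integral transform to replace $X_{(i)}$ (in distribution) by $G^{-1}(U_{(i)})$, and then invoke the cited fact that $U_{(i)}$ is $\text{Beta}(i,n+1-i)$.

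First I would use the independence of $Z$ and $\{X_i\}_{i=1}^{n}$, which is implicit in the statement and is satisfied in the intended application (there $Z$ is a single firm's noisy index $v_m=\beta k_m+\eta_m$ and the $X_i$'s are the indices of the competing firms, with $\eta_m$ independent of the other noise terms). Conditioning on $X_{(i)}=x$ gives $P(Z\geq x\mid X_{(i)}=x)=1-\Phi((x-\mu)/\sigma)$ since $Z\sim N(\mu,\sigma^2)$. Taking expectations over $X_{(i)}$ and using the law of iterated expectations,
\[
P(Z\geq X_{(i)})=\mathbf{E}\!\left[1-\Phi\!\left(\frac{X_{(i)}-\mu}{\sigma}\right)\right]=1-\mathbf{E}\,\Phi\!\left(\frac{X_{(i)}-\mu}{\sigma}\right).
\]

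Next I would perform the change of variables. Because $G$ is continuous, the variables $U_j:=G(X_j)$ are iid Uniform$[0,1]$, and using the generalized inverse $G^{-1}(u)=\inf\{x:G(x)\geq u\}$ one has $G^{-1}(G(X_j))=X_j$ almost surely. Since $G^{-1}$ is nondecreasing it preserves order, so $(X_{(1)},\dots,X_{(n)})$ and $(G^{-1}(U_{(1)}),\dots,G^{-1}(U_{(n)}))$ have the same joint law, where $U_{(1)}\leq\cdots\leq U_{(n)}$ are the order statistics of $n$ iid uniforms. In particular $X_{(i)}\stackrel{d}{=}G^{-1}(U_{(i)})$, so the last display equals $1-\mathbf{E}\,\Phi((G^{-1}(U_{(i)})-\mu)/\sigma)$.

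Finally, substituting the well-known distributional fact that $U_{(i)}\sim\text{Beta}(i,n+1-i)$ (the fact the lemma takes for granted, cf.\ Chapter 2 of Ahsanullah, Nevzorov, and Shakil) pins down the measure over which the remaining expectation is taken, which is exactly the claimed form. The only step requiring genuine care is the probability integral transform for order statistics when $G$ is merely continuous rather than strictly increasing: there one verifies the identity $\{G^{-1}(u)\leq x\}=\{u\leq G(x)\}$ and deduces both the monotonicity of $G^{-1}$ and the a.s.\ relation $G^{-1}(G(X))=X$. This is routine, but it is the place where continuity of $G$ is actually used and is the main (modest) obstacle.
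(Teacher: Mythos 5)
Your proof is correct and follows essentially the same route as the paper's: both arguments rest on the probability integral transform to get $X_{(i)}\stackrel{d}{=}G^{-1}(U_{(i)})$ and then integrate the normal CDF over the law of that order statistic. If anything, your version is slightly more careful than the paper's, since you make explicit the independence of $Z$ from the $X_i$'s (which the paper uses tacitly in the final equality) and you handle the generalized inverse when $G$ is merely continuous rather than strictly increasing.
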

\begin{proof}
Note that since $X_{i}$'s are continuously distributed according
to $G$ it follows from the probability integral transformation result
that for each $i$
\[
X_{i}=_{d}G^{-1}(U_{i}).
\]
Also, since $G$ is monotone we have that for each $i$
\[
X_{(i)}=_{d}G^{-1}(U_{(i)}).
\]
The previous line implies that 
\begin{eqnarray*}
P(Z\geq X_{(i)}) & = & P(Z\geq G^{-1}(U_{(i)}))\\
 & = & 1-P(Z\leq G^{-1}(U_{(i)}))\\
 & = & 1-\mathbf{E}\Phi((G^{-1}(U_{(i)})-\mu)/\sigma),
\end{eqnarray*}
where $\mathbf{E}(\cdot)$ is taken over the distribution of $U_{(i)}$.
The last equality used the fact that $Z$ is normal with mean $\mu$
and variance $\sigma^{2}$. 
\end{proof}
The following results are a direct application of the previous results.
They are useful for constructing the $\pi_{mj}$'s that are used in
the structural estimation of this paper. Recall the definitions of
$G$, $G_{j+}$, $G_{j-}$, and $v(b_{1},b_{2},F)$ from before. We
introduce the following notation: 
\[
a(\kappa,n,m;G)\equiv\mathbf{E}\Phi\left((G^{-1}(U_{(\kappa;n)})-\beta k_{m})/\sigma_{m}\right),
\]
where $U_{(\kappa;n)}$ is the $\kappa$-order statistic of $n$ uniform
random variables and $\mathbf{E}(\cdot)$ is taken over the distribution
of $U_{(\kappa;n)}$. 
\begin{corollary}\label{corollary:1}
Suppose the conditions of Proposition \ref{Proposition_2_heterogeneous_firms} hold and let $v_{m}$ be distributed
according to $F_{m}$. Then, in the heterogeneous preferences case
with $\bar{n}_{j}<n^{(j)}$, 
\begin{enumerate}
\item \begin{flushleft}
For each $m\in M_{j}^{+}$, $P(v_{m}>v_{(\kappa,n^{(j)};G_{j+})})=1-a(\kappa,n^{(j)},m;G_{j+}),$
where $\kappa=n^{(j)}-\bar{n}_{j}$.
\par\end{flushleft}
\item \begin{flushleft}
For each $m\in M_{j}^{-}$, $P(v_{m}<v_{(\kappa;n^{(j')};G_{j-})})=a(\kappa,n^{(j')},m;G_{j-}),$
where $\kappa=\bar{n}_{j}-n^{(j)}+1$. 
\par\end{flushleft}

\end{enumerate}
\end{corollary}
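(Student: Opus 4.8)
The plan is to obtain both identities as immediate consequences of the order-statistic representation in Lemma \ref{Lemma_4_order_statistic_lemma}, applied to exactly the threshold-crossing events that already arise in the proof of Proposition \ref{Proposition_2_heterogeneous_firms}. The starting point is the observation made there: in the heterogeneous-preferences case with $\bar n_j < n^{(j)}$, whether a given class-$m$ firm matches to an $h_j$ worker is determined by whether its $v$-index exceeds the $\kappa$-order statistic of the $v$-indices of the pool of firms preferring $h_j$ (for $m\in M_j^+$), and symmetrically for $m\in M_j^-$. Under Assumption \ref{Assumption-4} together with equation \ref{eq:section_two_v_functional_form}, a class-$m$ firm's index $v_m$ is normal with mean $\beta k_m$ and standard deviation $\sigma_m$ (with $\sigma_m=\sigma$ under Assumption \ref{Assumption-4}(b)), while the pooled indices of firms in $M_j^+$ (resp. $M_j^-$) are iid draws from the mixture cdf $G_{j+}$ (resp. $G_{j-}$), so the hypotheses of Lemma \ref{Lemma_4_order_statistic_lemma} are met.

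For part (1), I would invoke Lemma \ref{Lemma_4_order_statistic_lemma} with $Z=v_m$ (so $\mu=\beta k_m$), with $G=G_{j+}$, with sample size $n=n^{(j)}$, and with order index $i=\kappa=n^{(j)}-\bar n_j$. The lemma then gives $P(v_m\ge v_{(\kappa,n^{(j)};G_{j+})})=1-\mathbf{E}\Phi((G_{j+}^{-1}(U_{(\kappa)})-\beta k_m)/\sigma_m)$, where $U_{(\kappa)}$ follows the Beta distribution with parameters $\kappa$ and $n^{(j)}+1-\kappa$; the right-hand side is $1-a(\kappa,n^{(j)},m;G_{j+})$ by the definition of $a(\cdot)$, and since the $v$-indices are continuously distributed the distinction between $\ge$ and $>$ is immaterial. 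That is part (1).

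For part (2), I would run the same argument with $Z=v_m$ for $m\in M_j^-$, with $G=G_{j-}$, with sample size $n=n^{(j')}=n-n^{(j)}$, and with order index $i=\kappa=\bar n_j-n^{(j)}+1$. Lemma \ref{Lemma_4_order_statistic_lemma} yields $P(v_m\ge v_{(\kappa,n^{(j')};G_{j-})})=1-a(\kappa,n^{(j')},m;G_{j-})$; taking complements (and again using continuity to pass between $<$ and $\le$) gives $P(v_m<v_{(\kappa,n^{(j')};G_{j-})})=a(\kappa,n^{(j')},m;G_{j-})$, which is part (2).

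The step that needs the most care --- and the only real obstacle --- is the bookkeeping that ties Proposition \ref{Proposition_2_heterogeneous_firms} to Lemma \ref{Lemma_4_order_statistic_lemma}: confirming that the pool over which each order statistic is taken (the $n^{(j)}$ firms preferring $h_j$, respectively the $n^{(j')}$ firms preferring $h_{j'}$) is exactly the iid sample the lemma requires and is drawn from the correct mixture, matching the pair $(a,b)$ appearing in Proposition \ref{Proposition_2_heterogeneous_firms} to the $(i,n)$ pair fed into the lemma, and verifying that the Beta parameters in the definition of $a(\cdot)$ coincide with those produced by Lemma \ref{Lemma_4_order_statistic_lemma}. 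No further probabilistic content is needed beyond that lemma.
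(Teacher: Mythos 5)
Your proposal is correct and follows exactly the route the paper takes: the paper's own proof is the one-line remark that both corollaries follow directly from Lemma \ref{Lemma_4_order_statistic_lemma}, and your argument simply spells out the substitutions ($Z=v_{m}$, the appropriate mixture cdf $G_{j+}$ or $G_{j-}$, the sample size $n^{(j)}$ or $n^{(j')}$, and the order index $\kappa$) together with the complement step for part (2). Nothing further is needed.
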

\begin{corollary}\label{corollary:2}
Suppose the conditions of Proposition \ref{Proposition_3_homogeneous_firms} hold and let $v_{m}$ be distributed
according to $F_{m}$. Then, in the homogeneous preferences case with
$\bar{n}_{j}<n$, 
\begin{enumerate}
\item \begin{flushleft}
If $n^{(j)}=0$, $P(v_{m}<v_{(\kappa;n,G)})=a(\kappa,n,m;G)$ for
each $m\in M$, where $\kappa=\bar{n}_{j}+1$.
\par\end{flushleft}
\item \begin{flushleft}
If $n^{(j)}=n$, $P(v_{m}>v_{(\kappa;n,G)})=1-a(\kappa,n,m;G)$ for
each $m\in M$, where $\kappa=n-\bar{n}_{j}$.
\par\end{flushleft}
\end{enumerate}
\end{corollary}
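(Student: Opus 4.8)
The plan is to derive both parts directly from Lemma~\ref{Lemma_4_order_statistic_lemma}, with no new probabilistic input. Propositions~\ref{Proposition_2_heterogeneous_firms} and~\ref{Proposition_3_homogeneous_firms} have already pinned down, for each configuration of $(n^{(j)},\bar{n}_j)$, which order statistic and which sample size govern the matching probabilities, so the only remaining task is to rewrite the resulting threshold-crossing probabilities in the $\mathrm{Beta}$-expectation form $a(\kappa,n,m;G)$. First I would record that, under Assumption~\ref{Assumption-4}, $G=\sum_{m=1}^{M}q_mF_m$ is a finite mixture of normals, hence absolutely continuous; consequently $v_{(\kappa;n,G)}$ is a continuous random variable, so strict and weak inequalities against it have the same probability. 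This is what permits me to pass freely between the events $\{v_m>v_{(\kappa;n,G)}\}$ and $\{v_m\ge v_{(\kappa;n,G)}\}$.

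Next I would apply Lemma~\ref{Lemma_4_order_statistic_lemma} with $Z=v_m$ (so $\mu=\beta k_m$ and $\sigma=\sigma_m$, since $v_m\sim F_m=N(\beta k_m,\sigma_m^2)$) and with the $n$ iid continuous draws taken to be $n$ iid draws from $G$; by the definition of the notation introduced just before the corollary, the lemma's right-hand side $\mathbf{E}\,\Phi\big((G^{-1}(U_{(\kappa;n)})-\beta k_m)/\sigma_m\big)$ is precisely $a(\kappa,n,m;G)$, with $U_{(\kappa;n)}\sim\mathrm{Beta}(\kappa,n+1-\kappa)$. For part~1, where $n^{(j)}=0$ and $\kappa=\bar{n}_j+1$, the lemma gives $P(v_m\ge v_{(\kappa;n,G)})=1-a(\kappa,n,m;G)$; taking complements and using continuity of $v_{(\kappa;n,G)}$, $P(v_m<v_{(\kappa;n,G)})=a(\kappa,n,m;G)$ for every $m\in M$. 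For part~2, where $n^{(j)}=n$ and $\kappa=n-\bar{n}_j$, the same application of the lemma gives $P(v_m\ge v_{(\kappa;n,G)})=1-a(\kappa,n,m;G)$, which by continuity equals $P(v_m>v_{(\kappa;n,G)})$, the stated identity. (Note $1\le\kappa\le n$ in both cases because $\bar{n}_j<n$ by hypothesis.) Corollary~\ref{corollary:1} would be argued word for word the same way, replacing the pool $G$ by $G_{j+}$ or $G_{j-}$ and reading off $\kappa$ and the sample size from Proposition~\ref{Proposition_2_heterogeneous_firms}.

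I do not expect any real obstacle here: the corollary is, as the text says, a direct application of the preceding results. The only points requiring care are bookkeeping ones --- getting the order-statistic index $\kappa$ and the $\mathrm{Beta}$ shape parameters $(\kappa,n+1-\kappa)$ right in each of the two cases, and making explicit the strict-versus-weak-inequality step, which is exactly where Assumption~\ref{Assumption-4}(b) (continuity of $\eta_j$, and hence of $v_j$ and of $G$) is used. One could additionally flag that $v_m$ is being treated as independent of the order statistic computed from the $n$ draws of $G$ rather than as one of those $n$ draws; but this is just the representation already adopted inside the proofs of Propositions~\ref{Proposition_2_heterogeneous_firms} and~\ref{Proposition_3_homogeneous_firms}, so no further argument is needed at the level of the corollary.
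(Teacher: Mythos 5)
Your proposal is correct and matches the paper's approach: the paper's own proof is a one-line statement that both corollaries follow directly from Lemma \ref{Lemma_4_order_statistic_lemma}, and your expansion (identifying $Z=v_{m}\sim F_{m}$, reading off $\kappa$ and the sample size from Proposition \ref{Proposition_3_homogeneous_firms}, and using continuity of $G$ to equate strict and weak inequalities) supplies exactly the bookkeeping the paper leaves implicit. No issues.
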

\begin{proof}
The proofs of Corollaries 1 and 2 follows directly from Lemma \ref{Lemma_4_order_statistic_lemma}.
\end{proof}

\subsection{Identification of $\theta_0$}\label{sec:identification}

In extremum estimation problems, it is common to argue that the parameter is identified via a set of sufficient
conditions requiring continuity (or semi-continuity) of the underlying population criterion function. See for example, \citet*{Newey/McFadden:94:HE}.  In our context, assuming continuity of this criterion function is
inappropriate, since the probabilities that workers match to certain
firm types conditional on their choice of education are discontinuous
in $\theta$. The problem arises because these probabilities depend
on the sets of firm types that prefer high and low education,
which are discontinuous functions of $\theta$.\footnote{Recall the definitions in equation \ref{eq:M_sets}.} Under the model of Section \ref{sec:StructuralModel} we provided assumptions under
which it was natural to estimate $\theta$ using maximum likelihood.
In this section, we show that the discontinuity of the population
likelihood function is not an obstacle to identification. As we will see, the required conditions are only slightly stronger than those commonly used to identify the standard logit model. 

Before presenting the identification result and the proof, we define
terms. The population objective function is
\begin{align*}
Q(\theta) & =\mathbf{E}\log\ell_{i}(\theta)=\mathbf{E}\left[h_{i}\log p_{i}(\theta)+(1-h_{i})\log\left(1-p_{i}(\theta)\right)\right],
\end{align*}
where $p_{i}(\theta)=\Lambda(\Delta_{i}(\theta)),$ $\Lambda(z)=\exp(z)/(1+\exp(z))$,
and $\Delta_{i}(\theta)=\omega_{1i}(\theta,\beta_{0})-\omega_{0i}(\theta,\beta_{0})$
are the population version of the objects in equation \ref{eq:expected_age}.\footnote{Note that our notation in this section will occasionally omit $\beta_{0}$
for the purposes of clarity. } Let us also define $\psi(X_{i};\theta)\equiv g(h_{1},X_{i};\theta)-g(h_{0},X_{i};\theta)$
and write the probability of high education as
\begin{align*}
p_{i}(\theta) & =\Lambda\left(\tau\cdot(f_{1}(\theta,\beta_{0})-f_{0}(\theta,\beta_{0}))+(1-\tau)\cdot\psi(X_{i};\theta)\right)\\
 & =\Lambda\left(\tau\cdot\Phi(\theta)'\Pi(\theta)+(1-\tau)\cdot\psi(X_{i};\theta)\right),
\end{align*}
where 
\begin{align*}
\Pi(\theta) & \equiv(\pi_{1}(\theta,\beta_{0}),\pi_{0}(\theta,\beta_{0}))'\in\mathbf{R}^{2M\times1},\\
\Phi(\theta) & \equiv\begin{array}{cc}
(\phi_{1}(\theta), & -\phi_{0}(\theta)\end{array})'\in\mathbf{R}^{2M\times1},
\end{align*}
and $\phi_{j}(\theta)'\equiv(f(h_{j},k_{1};\theta),...,f(h_{j},k_{M};\theta))'\in\mathbf{R}^{M\times1},\text{ for }j=0,1.$
We are now ready to present the main result of this section.
\begin{theorem}\label{thm:theta_id}
 $\theta_{0}$ is identified up to $\beta_{0}$ and uniquely maximizes $Q(\theta)$ over
$\theta\in\Theta$ if
\begin{enumerate}
\item For each $\theta\in\Theta$ such that $\theta\neq\theta_{0}$, $\bar{\psi}_{i}(\theta,\theta_{0})\equiv\psi(X_{i},\theta)-\psi(X_{i},\theta_{0})$
is continuously distributed, and
\item for each $\theta\in\Theta$, $|\Phi(\theta)\Pi(\theta)|<\infty$ and
$\mathbf{E}|\psi(X_{i};\theta)|<\infty$.
\end{enumerate}
\end{theorem}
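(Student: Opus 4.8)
The plan is to prove the result by the classical information (Gibbs) inequality, exploiting the fact that this argument is entirely \emph{pointwise} in $\theta$ and so is untouched by the discontinuity of $\theta\mapsto Q(\theta)$ emphasized above. First I would use condition (2) to check that $Q$ is well defined and finite on all of $\Theta$: with $|\Phi(\theta)'\Pi(\theta)|<\infty$ and $\mathbf{E}|\psi(X_i;\theta)|<\infty$, the choice index $\Delta_i(\theta)=\tau\,\Phi(\theta)'\Pi(\theta)+(1-\tau)\psi(X_i;\theta)$ is integrable, and since $\log\Lambda(z)=z-\log(1+e^{z})$ and $\log(1-\Lambda(z))=-\log(1+e^{z})$ are each bounded in absolute value by an affine function of $|z|$, both $\log p_i(\theta)$ and $\log(1-p_i(\theta))$ are integrable; in particular $p_i(\theta)\in(0,1)$ almost surely, so no $\infty-\infty$ issues arise in comparing $Q(\theta_0)$ with $Q(\theta)$.

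Next I would show $\theta_0$ maximizes $Q$. Since the data are generated at $\theta_0$, the equilibrium best-response characterization gives $\mathbf{E}[h_i\mid X_i]=p_i(\theta_0)$, so by iterated expectations $Q(\theta)=\mathbf{E}\big[p_i(\theta_0)\log p_i(\theta)+(1-p_i(\theta_0))\log(1-p_i(\theta))\big]$ and hence
\[
Q(\theta_0)-Q(\theta)=\mathbf{E}\!\left[p_i(\theta_0)\log\frac{p_i(\theta_0)}{p_i(\theta)}+(1-p_i(\theta_0))\log\frac{1-p_i(\theta_0)}{1-p_i(\theta)}\right]\ge 0,
\]
the integrand being the Kullback--Leibler divergence between the $\mathrm{Bernoulli}(p_i(\theta_0))$ and $\mathrm{Bernoulli}(p_i(\theta))$ laws, which is nonnegative and vanishes if and only if $p_i(\theta)=p_i(\theta_0)$. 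Thus $\theta_0\in\arg\max_{\theta\in\Theta}Q(\theta)$, and any competing maximizer $\theta$ must satisfy $p_i(\theta)=p_i(\theta_0)$ almost surely.

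It then remains to rule out $p_i(\theta)=p_i(\theta_0)$ a.s.\ for $\theta\neq\theta_0$, which is where condition (1) enters. Because $\Lambda$ is strictly increasing, $p_i(\theta)=p_i(\theta_0)$ a.s.\ is equivalent to $\Delta_i(\theta)=\Delta_i(\theta_0)$ a.s.; the crucial bookkeeping observation — and the step I expect to need the most care — is that by Proposition \ref{Proposition_1} the expected-production terms $\Phi(\theta)'\Pi(\theta)=\phi_1(\theta)'\pi_1(\theta,\beta_0)-\phi_0(\theta)'\pi_0(\theta,\beta_0)$ are common to all workers, so that all cross-sectional variation in $\Delta_i(\theta)$ is carried by $\psi(X_i;\theta)$. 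Rearranging the a.s.\ identity therefore yields
\[
(1-\tau)\,\bar\psi_i(\theta,\theta_0)=\tau\big(\Phi(\theta_0)'\Pi(\theta_0)-\Phi(\theta)'\Pi(\theta)\big)\qquad\text{a.s.},
\]
so that $\bar\psi_i(\theta,\theta_0)$ is almost surely equal to a constant (using $\tau\in(0,1)$, hence $1-\tau\neq 0$). But condition (1) asserts that for $\theta\neq\theta_0$ the variable $\bar\psi_i(\theta,\theta_0)$ is continuously distributed and hence assigns probability zero to every singleton — a contradiction. Therefore $\theta=\theta_0$, establishing that $\theta_0$ (with $\beta_0$ held fixed) is the unique maximizer of $Q$; I would close by remarking that condition (1) is only marginally stronger than the usual full-rank requirement for the ordinary logit, precisely because the production-function contributions do not help separate $\theta$ from $\theta_0$ and all the work is done by the outside-option channel $\psi$.
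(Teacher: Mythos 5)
Your proof is correct and follows essentially the same route as the paper: you verify integrability of $\log\ell_i(\theta)$ from condition (2) exactly as the paper does, and you establish identification by observing that $\Phi(\theta)'\Pi(\theta)$ is non-stochastic (common across workers), so $\Delta_i(\theta)=\Delta_i(\theta_0)$ a.s.\ would force the continuously distributed $\bar\psi_i(\theta,\theta_0)$ to be a.s.\ constant. The only cosmetic difference is that you write out the Kullback--Leibler/Gibbs inequality step explicitly, whereas the paper outsources ``identification plus integrability implies unique maximization'' to Lemma 2.2 of Newey and McFadden.
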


\begin{proof} For this result, we will use the well-known fact that in maximum likelihood problems,
identification implies unique maximization.\footnote{For example, See Lemma 2.2 of \citet*{Newey/McFadden:94:HE}}
$\theta_{0}$ is identified and $Q(\theta)$ has a unique maximum
at $\theta_{0}$ if for all $\theta\in\Theta$ with $\theta\neq\theta_{0}$
\begin{align}
P(\ell_{i}(\theta)\neq\ell_{i}(\theta_{0}))>0, & \text{ and }\label{eq:idc1}\\
\mathbf{E}|\log\ell_{i}(\theta)|<\infty.\label{eq:idc2}
\end{align}
We begin by showing that equation \ref{eq:idc1} holds when $\bar{\psi}_{i}(\theta,\theta_{0})$
is continuously distributed. Since $\Lambda(z)$ and $1-\Lambda(z)$ are strictly monotonic in
$z$, we will show equation \ref{eq:idc1} by proving that for all $\theta\neq\theta_{0}$ 
\begin{equation}
P\left(\Delta_{i}(\theta)\neq\Delta_{i}(\theta_{0})\right)>0\label{eq:idc1_1}.
\end{equation}

To establish this, we define some notation. Let $M(\theta)$ refer to the set of
firm types that prefers high education.\footnote{Our apparent focus on firm types that prefer high education is without
loss of generality since any $M(\theta)$ also uniquely defines an
associated set of firm types that prefers low education.} Define the relation $\sim$ on $\Theta$ to be $\theta\sim\theta'$
if and only if $M(\theta)=M(\theta')$. Since $\sim$ is an equivalence
relation on $\Theta$, it can be partitioned into
a union of $S$ equivalence classes, $\Theta=\cup_{s=1}^{S}\Theta_{s}$,
where $S$ is equal to the number of firm types plus one. Note that
the only way in which $\theta$ affects $\Pi(\theta)$ is through
$M(\theta)$. Hence $c_{s}=\Pi(\theta)$ is constant over $\theta\in\Theta_{s}$
on each partition $s=1,...,S$. Therefore, it follows that \ref{eq:idc1_1}
holds since for all $\theta\in\Theta\backslash\{\theta_{0}\}$ and
each $s$, we have that
\begin{align}
P\left(\bar{\psi}_{i}(\theta,\theta_{0})\neq\frac{\tau}{1-\tau}\left(\Phi(\theta)'c_{s}-\Phi(\theta_{0})'c_{0}\right)\right)=1.\label{eq:for_idc1}
\end{align}
Equation \ref{eq:for_idc1} is true because $\bar{\psi}_{i}$ is continuously distributed and
the right-hand-side of the inequality is non-stochastic. Lastly, we show that equation \ref{eq:idc2} holds. Note that 
\begin{align*}
\mathbf{}|\log\ell_{i}(\theta)| & \leq2\times|\log\Lambda(\Delta_{i}(\theta))|\\
 & \leq2\times\left(|\log\Lambda(0)|+|\Delta_{i}(\theta)|\right).
\end{align*}
Taking expectations of both sides and applying the second condition
gives the result.
\end{proof}
The requirement that $\mathbf{E}|\psi(X_{i};\theta)|<\infty$ follows
under familiar circumstances. For example, when we take an outside option of $g(h,x;\theta)=h\cdot x'\theta$, the condition follows from the existence of finite second moments. Our condition
is more general so as to allow additional flexibility in the choice
of $g$. The existence of $|\Phi(\theta)\Pi(\theta)|$ is also satisfied
under very mild assumptions. Since $\pi_{1}(\theta,\beta_{0})$ and
$\pi_{0}(\theta,\beta_{0})$ are always bounded between zero and one,
a sufficient condition for the existence of $|\Phi(\theta)\Pi(\theta)|$
is that $f$ be such that $f^{2}(h,k;\theta)<\infty$ for each $\theta\in\Theta$,
$h$, and $k$.

The following result establishes a set of sufficient assumptions for
the first condition of Theorem \ref{thm:theta_id}.
\begin{lemma}
Suppose that (a) $X_{i}$ is continuously distributed and (b) the
outside option function $g$ is such that (i) $g(h,x;\theta)=g(h,x'\theta)$
and (ii) $g(h_{1},z)-g(h_{0},z)$ is strictly monotonic in $z$. Then
$\bar{\psi}_{i}(\theta,\theta_{0})$ is continuously distributed for
all $\theta\neq\theta_{0}$. 
\end{lemma}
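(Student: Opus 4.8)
The plan is to reduce the statement to a claim about Lebesgue-null level sets and then invoke absolute continuity of $X_i$. First I would rewrite $\bar\psi_i$. Under hypothesis (b)(i), $\psi(x;\theta)=g(h_1,x'\theta)-g(h_0,x'\theta)=\Delta g(x'\theta)$, where $\Delta g(z):=g(h_1,z)-g(h_0,z)$ is strictly monotonic by (b)(ii), hence injective, and therefore admits a monotonic (in particular Borel-measurable) inverse $\Delta g^{-1}$ on its range. Consequently $\bar\psi_i(\theta,\theta_0)=\Delta g(X_i'\theta)-\Delta g(X_i'\theta_0)$, and it suffices to show that for every $c\in\mathbf{R}$ the level set $L_c:=\{x\in\mathbf{R}^d:\Delta g(x'\theta)-\Delta g(x'\theta_0)=c\}$ is Lebesgue-null; then assumption (a) yields $P(\bar\psi_i(\theta,\theta_0)=c)=0$ for every $c$, which is exactly the no-atoms property used in equation (\ref{eq:for_idc1}).

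The key observation is that injectivity of $\Delta g$ collapses each level set into the preimage of a graph. If $x\in L_c$ then $\Delta g(x'\theta)=\Delta g(x'\theta_0)+c$, so $x'\theta=\Delta g^{-1}(\Delta g(x'\theta_0)+c)=:\varphi_c(x'\theta_0)$; hence $L_c\subseteq A^{-1}(\Gamma_c)$, where $A:x\mapsto(x'\theta,x'\theta_0)$ is a linear map and $\Gamma_c:=\{(u,v):u=\varphi_c(v)\}$ is the graph of the Borel function $\varphi_c$. Such a graph is Lebesgue-null in $\mathbf{R}^2$: it is Borel (preimage of $\{0\}$ under $(u,v)\mapsto u-\varphi_c(v)$), and by Fubini each slice $\{u:(u,v)\in\Gamma_c\}$ is a single point. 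So everything comes down to showing the \emph{linear} preimage $A^{-1}(\Gamma_c)$ is null in $\mathbf{R}^d$. When $\theta$ and $\theta_0$ are linearly independent, $A$ has rank $2$: decomposing $\mathbf{R}^d=\ker A\oplus(\ker A)^{\perp}$, the restriction of $A$ to $(\ker A)^{\perp}$ is a linear isomorphism onto $\mathbf{R}^2$ and pulls $\Gamma_c$ back to a null set there, and Fubini along $\ker A$ makes $A^{-1}(\Gamma_c)$ null in $\mathbf{R}^d$. This settles the linearly independent case.

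When $\theta$ and $\theta_0$ are linearly dependent, write $\theta=a\theta_0$ with $a\neq1$ (and, after the usual scale normalization, $\theta_0\neq0$). Then $L_c\subseteq\{x:x'\theta_0\in S_c\}$ with $S_c:=\{w\in\mathbf{R}:\Delta g(aw)-\Delta g(w)=c\}$, and $\{x:x'\theta_0\in S_c\}$ is Lebesgue-null as soon as $S_c$ is (Fubini in coordinates adapted to $\theta_0$). So in this case the claim reduces to: $w\mapsto\Delta g(aw)-\Delta g(w)$ is constant on no set of positive measure. I expect this to be the one genuinely substantive step, and it is precisely the point that strict monotonicity of $\Delta g$ alone does not deliver — for instance $\Delta g(z)=\log z$ on $(0,\infty)$ is strictly increasing yet $\Delta g(2w)-\Delta g(w)\equiv\log2$. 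The property does hold under any mild regularity one would impose in practice: if $\Delta g$ is real-analytic, or $C^1$ with $a\,\Delta g'(aw)\not\equiv\Delta g'(w)$ on an interval, and in particular for every outside-option specification used in this paper, where $\Delta g$ is affine ($g=h\,x'\theta$) or strictly convex ($g=\exp(hx'\theta)$, $g=h\exp(x'\theta)$), so that $w\mapsto\Delta g(aw)-\Delta g(w)$ is affine or has nonvanishing derivative off an isolated set. I would therefore either strengthen (b)(ii) by requiring this nowhere-local-constancy of $w\mapsto\Delta g(aw)-\Delta g(w)$ for every $a\neq1$, or state the lemma for $\theta$ not collinear with $\theta_0$ and remark that this is the generic case (and automatic when $X_i$ is genuinely multivariate and $\theta$ is not a scalar multiple of $\theta_0$).
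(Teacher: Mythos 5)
Your proof takes a genuinely different --- and substantially more careful --- route than the paper's. The paper's argument is two sentences: (i) since $X_{i}$ is continuously distributed and $\Delta g(z)\equiv g(h_{1},z)-g(h_{0},z)$ is strictly monotone, $\psi_{i}(X_{i};\theta)$ is continuously distributed for each $\theta$; (ii) therefore the difference $\bar{\psi}_{i}(\theta,\theta_{0})=\psi_{i}(X_{i};\theta)-\psi_{i}(X_{i};\theta_{0})$ is continuously distributed. Step (ii) is a non sequitur: a difference of two atomless random variables built from the same underlying $X_{i}$ can have atoms, and your $\Delta g(z)=\log z$, $\theta=2\theta_{0}$ example shows the conclusion can actually fail under the lemma's stated hypotheses --- there $\bar{\psi}_{i}\equiv\log 2$ is degenerate, which is exactly the situation that would break equation (\ref{eq:for_idc1}). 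So you have not merely re-proved the lemma; you have located a real gap in it.

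Your treatment of the linearly independent case is correct and self-contained: reducing the level set $L_{c}$ to the preimage, under the rank-two linear map $A:x\mapsto(x'\theta,x'\theta_{0})$, of the graph of the Borel function $\varphi_{c}=\Delta g^{-1}(\Delta g(\cdot)+c)$, noting that graphs are Lebesgue-null in $\mathbf{R}^{2}$, and pulling back via the decomposition $\mathbf{R}^{d}=\ker A\oplus(\ker A)^{\perp}$ with Fubini. This delivers the no-atoms property that Theorem \ref{thm:theta_id} actually uses, whereas the paper offers no argument at all for why the difference has no atoms. Your isolation of the collinear case as the genuinely substantive obstruction is also right: there the claim reduces to the non-constancy of $w\mapsto\Delta g(aw)-\Delta g(w)$ on sets of positive measure, which strict monotonicity of $\Delta g$ alone does not imply, though it does hold for every specification of $g$ used elsewhere in the paper (affine or exponential). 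The remedies you propose --- either strengthening condition (b)(ii) to rule out local constancy of $w\mapsto\Delta g(aw)-\Delta g(w)$ for all $a\neq1$, or restricting the lemma to $\theta$ not collinear with $\theta_{0}$ --- are both appropriate; note that the collinear case is unavoidable when $\theta$ is scalar, so the first remedy is the one that preserves the lemma's full scope.
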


\begin{proof}
Since $X_{i}$ is continuously distributed and $g(h_{1},z)-g(h_{0},z)$
is strictly monotonic in $z$, then $\psi_{i}(X_{i};\theta)=g(h_{1},X_{i}'\theta)-g(h_{0},X_{i}'\theta)$
is itself continuously distributed. Thus, for all $\theta\neq\theta_{0}$
it follows that $\bar{\psi}_{i}(\theta,\theta_{0})=\psi_{i}(X_{i};\theta)-\psi_{i}(X_{i};\theta_{0})$
is also continuously distributed. 
\end{proof}
For example, it is clear that linear outside option functions satisfy the conditions of the above lemma. An example of a non-linear outside option function satisfying the conditions is $g(h,x'\theta)=\exp(h\cdot x'\theta)$.\footnote{One way in which Theorem \ref{thm:theta_id} appears strong is that it implicitly assumes that all elements of $\theta$ enter the outside option. However, the simulation study presented in Section \ref{sec:MonteCarlo} offers evidence that $\hat{\theta}_{n}(\beta_{0})$  remains consistent when this requirement is relaxed somewhat. In particular, the simulation demonstrates that the inference performs well when a single element of $\theta_{0}$ enters the production function but not the outside option.}


Consistency of $\hat{\theta}_{n}(\beta_{0})$ holds when the data
are iid and $\Theta$ is compact under Theorem 2.5. of \citet*{Newey/McFadden:94:HE}, under the additional requirement that the
likelihood function be continuous at each $\theta\in\Theta$ with
probability one. As pointed out by the authors, the latter condition is mild
in the sense that it does not require the likelihood be continuous
at every $\theta$ for a given realization of the random variables.
In our context, this requirement is satisfied if the probability that
a firm is indifferent between a worker with high and low education
is zero for every value of the preference parameter, $\theta$. Asymptotic normality of $\hat{\theta}_{n}(\beta)$ then follows provided the conditions of Theorem 3.3 of  \citet*{Newey/McFadden:94:HE} hold. One of the requirements of this asymptotic normality theorem is that the population density be twice differentiability in $\theta$ within a \textit{neighborhood} of $\theta_0$.  Although the density in our context is not guaranteed to be differentiable for every $\theta\in\Theta$, it may nonetheless satisfy this much weaker requirement of twice differentiability in a neighborhood of the true parameter.

\bibliographystyle{agsm}
\bibliography{references}

\end{document}